\documentclass[11pt]{article}

\usepackage[margin=1in]{geometry}
\usepackage{iftex}
\usepackage{xifthen}
\usepackage{amsmath,amssymb,amsthm,mathtools}
\usepackage[T1]{fontenc}
\usepackage{lmodern}
\usepackage{microtype}
\usepackage[dvipsnames]{xcolor}
\usepackage{enumitem}
\usepackage{tcolorbox}
\tcbuselibrary{breakable}
\usepackage[numbers]{natbib}
\usepackage[hidelinks]{hyperref}
\hypersetup{
	colorlinks=true,
	pdfpagemode=UseNone,
    citecolor=OliveGreen,
    linkcolor=NavyBlue,
    urlcolor=Magenta,
	pdfstartview=FitW
}
\usepackage{prettyref}

\newrefformat{thm}{\textcolor{NavyBlue}{Theorem~\ref{#1}}}
\newrefformat{lem}{\textcolor{NavyBlue}{Lemma~\ref{#1}}}
\newrefformat{ppt}{\textcolor{NavyBlue}{Property~\ref{#1}}}
\newrefformat{pps}{\textcolor{NavyBlue}{Proposition~\ref{#1}}}
\newrefformat{def}{\textcolor{NavyBlue}{Definition~\ref{#1}}}
\newrefformat{cor}{\textcolor{NavyBlue}{Corollary~\ref{#1}}}
\newrefformat{fig}{\textcolor{NavyBlue}{Figure~\ref{#1}}}
\newrefformat{fact}{\textcolor{NavyBlue}{Fact~\ref{#1}}}
\newrefformat{claim}{\textcolor{NavyBlue}{Claim~\ref{#1}}}
\newrefformat{eq}{\textcolor{NavyBlue}{(\ref{#1})}}
\newrefformat{alg}{\textcolor{NavyBlue}{Algorithm~\ref{#1}}}
\newrefformat{cond}{\textcolor{NavyBlue}{Condition~\ref{#1}}}
\newrefformat{obs}{\textcolor{NavyBlue}{Observation~\ref{#1}}}
\newrefformat{item}{\textcolor{NavyBlue}{Item~\ref{#1}}}
\newrefformat{sec}{\textcolor{NavyBlue}{Section~\ref{#1}}}
\newrefformat{cons}{\textcolor{NavyBlue}{Constraint~\ref{#1}}}
\newrefformat{app}{\textcolor{NavyBlue}{Appendix~\ref{#1}}}
\newrefformat{item}{\textcolor{NavyBlue}{Item~\ref{#1}}}

\usepackage{amsthm, thmtools}
\usepackage{hyperref}

\usepackage[capitalise,nameinlink]{cleveref}
\crefformat{equation}{#2(#1)#3}
\Crefformat{equation}{#2(#1)#3}
\crefrangeformat{equation}{#3(#1)#4--#5(#2)#6}
\Crefrangeformat{equation}{#3(#1)#4--#5(#2)#6}

\newcommand{\addsharedenv}[1]{\declaretheorem[sibling=theorem]{#1}}

\theoremstyle{plain} 
\declaretheorem{theorem} % Base counter
\forcsvlist{\addsharedenv}{observation, claim, fact, lemma, proposition, corollary}

\theoremstyle{definition} 
\forcsvlist{\addsharedenv}{definition, remark, condition, assumption, example}
\makeatletter

\newcommand{\Rmnum}[1]{\expandafter\@slowromancap\romannumeral #1@}
\makeatother

\def\*#1{\mathbf{#1}}      % \*A for \mathbf{A}
\def\+#1{\mathcal{#1}}     % \+A for \mathcal{A}
\def\-#1{\mathrm{#1}}      % \-A for \mathrm{A}
\def\^#1{\mathbb{#1}}      % \^A for \mathbb{A}
\def\!#1{\mathfrak{#1}}    % \!A for \mathfrak{A}
\def\$#1{\mathscr{#1}}     % \$A for \mathscr{A}

\newcommand{\tuple}[2][]{ \ifthenelse{\isempty{#1}}
  {{\left(#2\right)}}
  {{\left(#2\right)}_{{#1}}} }

\newcommand{\set}[2][]{ \ifthenelse{\isempty{#1}}
  {{\left\{#2\right\}}}
  {{\left\{#2\right\}}_{{#1}}} }

\newcommand{\abs}[1]{\left\vert#1\right\vert}

\DeclareMathOperator{\oPr}{\mathbf{Pr}}
\renewcommand{\Pr}[2][]{ \ifthenelse{\isempty{#1}}
  {\oPr\left[#2\right]}
  {\oPr_{#1}\left[#2\right]} }

\DeclareMathOperator{\oE}{\mathbf{E}}
\newcommand{\E}[2][]{ \ifthenelse{\isempty{#1}}
  {\oE\left[#2\right]}
  {\oE_{#1}\left[#2\right]} }

\DeclareMathOperator{\oVar}{\mathbf{Var}}
\newcommand{\Var}[2][]{ \ifthenelse{\isempty{#1}}
  {\oVar\left[#2\right]}
  {\oVar_{#1}\left[#2\right]} }

\DeclareMathOperator{\oEnt}{\mathbf{Ent}}
\newcommand{\Ent}[2][]{ \ifthenelse{\isempty{#1}}
  {\oEnt\left[#2\right]}
  {\oEnt_{#1}\left[#2\right]} }

\newcommand{\MC}{\mathcal{MC}}

\newcommand{\per}[1]{\operatorname{per}\left(#1\right)}
\newcommand{\Cov}{\operatorname{Cov}}
\newcommand{\one}{\mathbf{1}}
\newcommand{\R}{\mathbb{R}}

\newcommand{\trel}{\tau_{\mathrm{rel}}}
\newcommand{\trelax}{\trel}

\newcommand{\tholes}{\tau_{\mathrm{holes}}}

\newcommand{\fpras}{\ensuremath{\mathsf{FPRAS}}}
\newcommand{\eps}{\varepsilon}
\newcommand{\inner}[2]{\left\langle #1,#2\right\rangle_{\pi}}

\newcommand{\e}{\mathrm{e}}
\renewcommand{\epsilon}{\varepsilon}
\renewcommand{\emptyset}{\varnothing}

\newcommand{\tp}{\tuple}

\title{Faster FPRAS for the Permanent\\ via Restricted Poincar\'e Inequalities and Coupled Flows}
\author{Xiaoyu Chen\thanks{Email: \texttt{xiaoyu@mit.edu}, 
Massachusetts Institute of Technology. Supported by the NSF CAREER grant CCF-2443045, and the Reed Fund at MIT.}%
\and Eric Vigoda\thanks{
  Emails: \texttt{\{vigoda, xiongxinyang\}@ucsb.edu},
  University of California, Santa Barbara.
  Supported in part by NSF grant CCF-2147094.
}\and Xiongxin Yang\footnotemark[2]}
\date{\today}

\begin{document}
\maketitle
\begin{abstract}
The permanent of an $n\times n$ $0/1$ matrix $A$ equals the number of perfect matchings in the bipartite graph with edges defined by $A$.  Jerrum, Sinclair, and Vigoda (2004) presented an FPRAS for approximating the permanent of any nonnegative matrix using a novel simulated-annealing algorithm.  The running time was improved by Bez\'akov\'a, \v{S}tefankovi\v{c}, Vazirani, and Vigoda (2008) to $O(n^7\log^4 n)$ for $0/1$ matrices, for any fixed approximation and success parameters.  We present the first asymptotic improvement over this running time bound, obtaining an $O(n^6\log^5 n)$-time algorithm.  As in the previous works, our algorithm extends to arbitrary nonnegative matrices.

The analysis of Bez\'akov\'a et al.\ yields an $O(n^4)$ relaxation time bound for the JSV Markov chain on perfect and near-perfect matchings with ideal hole weights, under which each hole pattern (the unmatched vertices, if any) is equally likely in the stationary distribution.  We introduce a restricted Poincar\'e inequality for the partition into hole patterns and prove an $O(n^3)$ bound on the corresponding restricted relaxation time.  Our proof uses a coupled multicommodity flow argument inspired by a recent transport-flow argument of Chen et al.~(2025) for the
Jerrum-Sinclair chain on all matchings.
\end{abstract}

\section{Introduction}
The \emph{permanent} of an $n\times n$ matrix $A=(a_{ij})$ is
\[
\per{A}=\sum_{\sigma\in S_n}\prod_{i=1}^{n}a_{i,\sigma(i)},
\]
where the sum ranges over all permutations of $[n]=\{1,\dots,n\}$.  For a $0/1$ matrix $A$, $\per{A}$ counts the perfect matchings of the bipartite graph whose bipartite adjacency matrix is $A$, equivalently, the close-packed configurations of the classical dimer model of statistical physics.  Beyond this central role in combinatorics and statistical physics, the permanent has found applications in areas such as computer vision and statistics~\cite{atanasov2016localization,oh2009markov}.
Valiant~\cite{Valiant79} proved that exact evaluation is $\#\mathsf{P}$-complete even for $0/1$ matrices.
Approximation is therefore the natural recourse.

The dominant approach has been the Markov chain Monte Carlo (MCMC) method, made possible by the polynomial-time equivalence between approximate sampling and approximate counting for self-reducible problems~\cite{JVV86}.  Broder~\cite{Broder86} proposed a natural Markov chain on all perfect and near-perfect matchings whose stationary distribution is uniform; a near-perfect matching is a matching with exactly two unmatched vertices.  Jerrum and Sinclair~\cite{JS89} analyzed the convergence of this chain and obtained an \fpras{} for every class of $0/1$ matrices in which near-perfect matchings do not outnumber perfect matchings by more than a polynomial factor; this includes all dense matrices and almost all random matrices.  Karmarkar, Karp, Lipton, Lov\'asz, and Luby~\cite{KKLLL93} gave a Monte Carlo algorithm based on an easily computed unbiased estimator, which can require exponentially many trials on adversarial inputs but performs well on almost every random matrix.  Two further lines of attack were pursued in the 1990s: Jerrum and Vazirani~\cite{JV96} obtained a subexponential algorithm running in time $\exp\bigl(O(\sqrt{n})\bigr)$, and Barvinok~\cite{Barvinok99} gave polynomial-time algorithms approximating the permanent within a simply exponential factor.  Deterministic quasipolynomial algorithms are also known~\cite{Barvinok17}, but only for matrices whose entries lie in a fixed interval $[\delta,1]$.
%: Barvinok approximates $\ln\per{A}$ within additive error $\eps$ in time $n^{O(\log n-\log\eps)}$ on such instances.

The breakthrough came with the celebrated \fpras{} of Jerrum, Sinclair, and Vigoda~\cite{JSV04}, the first to handle \emph{arbitrary} matrices with nonnegative entries.  Their algorithm is a simulated-annealing process. 
For a bipartite graph $G=(V_1\cup V_2,E)$, let $\mathcal{P}$ denote the collection of perfect matchings in the complete bipartite graph on $V_1\cup V_2$, and, for $u\in V_1$ and $v\in V_2$, let $\mathcal{N}(u,v)$ denote the collection of near-perfect matchings whose unmatched vertices are $u$ and $v$.  In graph-theoretic terms, $\per{A}$ is the number of matchings in $\mathcal{P}$ whose edges all lie in $E$.  For activities $\lambda:V_1\times V_2\rightarrow(0,1]$, a matching $M$, and a collection $S$ of matchings, write
\begin{equation*}
  \lambda(M):=\prod_{\{u,v\}\in M}\lambda(u,v),
  \qquad
  \lambda(S):=\sum_{M\in S}\lambda(M).
\end{equation*}

The simulated annealing algorithm of \cite{JSV04} starts from the complete bipartite graph by setting $\lambda(u,v)=1$ for all $u\in V_1$ and $v\in V_2$.  At this initial stage, $\lambda(\mathcal{P})=n!$.  The algorithm gradually decreases the activities of the nonedges to $1/n!$; throughout, it maintains \emph{hole weights} $w(u,v)$ approximating the \emph{ideal weights}, defined by
\begin{restatable}{equation}{idealweights}
  w^*(u,v)
    =
    \frac{\lambda(\mathcal{P})}{\lambda(\mathcal{N}(u,v))}.
\end{restatable}
With these weights, the perfect block $\mathcal{P}$ and each of the $n^{2}$ near-perfect blocks $\mathcal{N}(u,v)$ carry equal stationary mass.  Jerrum, Sinclair, and Vigoda \cite{JSV04} proved that the Metropolis chain on perfect and near-perfect matchings weighted by the ideal weights (or a constant-factor approximation) has polynomial mixing time; this yielded a polynomial-time \fpras{} for the permanent.

Building on this framework, Bez\'akov\'a, \v{S}tefankovi\v{c}, Vazirani, and Vigoda~\cite{BSVV08} redesigned the cooling schedule so that it accelerates as the temperature decreases, thereby reducing the number of annealing phases from $O(n^{2}\log n)$ to $O(n\log^{2}n)$; they also proved new structural inequalities about perfect matchings in bipartite graphs which reduced the relaxation-time bound to $O(n^4)$.  Their \fpras{} for estimating the permanent runs in time
\[
O\bigl(n^{7}\log^{4}n+n^{6}\log^{5}n\,\eps^{-2}\bigr),
\]
which remained the best known bound for nearly two decades.  
The breakdown of the running time is roughly the following: $O(n\log^2 n)$ intermediate temperatures to change the activity $\lambda(u,v)$ for each nonedge $\{u,v\}\notin E$ from $1$ to $1/n!$; $O(n^2\log n)$ samples at each temperature to recalibrate the weights; and $O(n^4\log n)$ to generate each sample after accounting for the logarithmic total-variation overhead beyond the $O(n^4)$ relaxation-time bound.

The weight-recalibration term is the bottleneck in this bound; reducing it by a factor of $n$ yields our main theorem.

\begin{theorem}
\label{thm:main-permanent}
For all $0<\varepsilon,\delta<1$, there exists a randomized algorithm that,
with probability at least $1-\delta$, approximates the permanent of an
$n\times n$ $0/1$ matrix $A$ within a factor $(1\pm\varepsilon)$ in time
$O\left({n^6}{\varepsilon^{-2}}\log^5 n\,\log(2/\delta)
\right)$.
The algorithm extends to arbitrary matrices with nonnegative entries.
\end{theorem}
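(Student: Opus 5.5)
We follow the simulated-annealing framework of \cite{JSV04} with the accelerated cooling schedule of \cite{BSVV08}, and change only how the hole weights are recalibrated at each temperature. The cost is then
\[
(\#\text{phases})\times(\#\text{samples per phase})\times(\text{cost per sample}),
\]
and the gain comes entirely from a smaller sample cost. The schedule is that of \cite{BSVV08}, with $O(n\log^2 n)$ phases. In each phase the activities change slightly, so the current weights $w$ stay within a constant factor of the ideal weights $w^*$. We then estimate every ratio $w^*(u,v)$ to within a $(1\pm1/\mathrm{poly})$ factor.

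At the current temperature, let $\pi_w$ be the stationary distribution of the JSV chain on $\mathcal{P}\cup\bigcup_{u,v}\mathcal{N}(u,v)$. The key observation is that recalibration only needs the block masses $\pi_w(\mathcal{P})$ and $\pi_w(\mathcal{N}(u,v))$. These are expectations of block indicators, that is, of functions constant on each hole-pattern block. For such functions, the variance of the empirical average along a chain started near stationarity is governed by the restricted Poincar\'e constant. This constant is the smallest $\trel^{\mathrm{res}}$ such that
\[
\Var[\pi_w]{f}\le \trel^{\mathrm{res}}\,\mathcal{E}(f,f)
\]
for all $f$ measurable with respect to the partition into hole patterns. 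The full relaxation time is not needed. The paper's main technical result gives $\trel^{\mathrm{res}}=O(n^3)$, proved by the coupled multicommodity flow argument. I take it as given here; it carries the real weight of the theorem.

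The steps are as follows. First, after a burn-in of length $O(n^4\log n)$, which the full relaxation bound of \cite{BSVV08} makes sufficient for warmness, run a single long trajectory. Alternatively, run a few trajectories, and take time averages of the $n^2+1$ block indicators. Second, for block-measurable $f$, the spectral measure of $f$ is supported on eigenfunctions with gap at least $1/\trel^{\mathrm{res}}$ in the Dirichlet-form sense. The standard bound on the variance of an ergodic average then gives
\[
\Var{\tfrac1T\textstyle\sum_t f(X_t)}\lesssim \trel^{\mathrm{res}}\Var[\pi_w]{f}/T.
\]
Since each block has mass $\Theta(1/n^2)$, relative accuracy $1/\mathrm{poly}$ across all blocks requires total trajectory length $O(n^2\log n\cdot n^3\cdot\log n)=O(n^5\log^2 n)$ per phase, instead of $O(n^6\log^2n)$. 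Use median-of-means with a union bound over the $n^2$ blocks and all phases. Third, multiply by the $O(n\log^2 n)$ phases to obtain $O(n^6\log^4 n)$ up to a logarithmic factor. The final estimation of the permanent uses the same block-indicator trick, contributing the $\eps^{-2}$ term. Boosting with a median of $O(\log(1/\delta))$ runs gives the $\log(2/\delta)$ factor. Nonnegative matrices are handled exactly as in \cite{JSV04,BSVV08}.

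The main obstacle, apart from the restricted inequality itself, is the second step. The restricted Poincar\'e inequality controls only $\mathcal{E}$ against the variance of block functions, whereas the ergodic-average variance involves $\langle f,(I-P)^{-1}f\rangle_\pi$. I would first try the variational characterization
\[
\langle f,(I-P)^{-1}f\rangle_\pi=\sup_g\bigl(2\langle f,g\rangle_\pi-\mathcal{E}(g,g)\bigr).
\]
Projecting $g$ onto block functions via conditional expectation decreases $\mathcal{E}$ only up to constants for a lazy chain. If that fails, I would fall back on Dirichlet-form comparison with the projected chain on hole patterns.
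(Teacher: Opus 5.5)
Your accounting (phases times per-phase refinement cost) matches the paper's, but there is a genuine gap in your second step, and it starts with your definition. You define the constant as the best $C$ in $\Var[\pi_w]{f}\le C\,\mathcal{E}(f,f)$ with $f$ ranging only over block-measurable functions. That is the relaxation time of the projection chain on hole patterns. It does not control the autocorrelation of block observables, because the hidden position inside a block carries memory.

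Take the lazy walk on a path of $N$ states, split into two halves. Testing block-constant functions gives a constant $\Theta(N)$. Yet $g=\one_{\text{right half}}-\tfrac12$ has $\langle g,(I-P)^{-1}g\rangle_\pi/\Var[\pi]{g}=\Theta(N^2)$, since the walk needs $\Theta(N^2)$ steps to switch halves. This example defeats each bridge you propose:
\begin{itemize}
\item A block-constant function generally has spectral weight on slow eigenfunctions that are not block-constant, so your spectral claim is false.
\item $h(i)=i/N$ has $\mathcal{E}(h,h)\asymp N^{-2}$ but $\mathcal{E}(\mathsf A h,\mathsf A h)\asymp N^{-1}$. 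So block-averaging can increase the Dirichlet form by an unbounded factor, even for a lazy chain.
\item Comparison with the projected chain is exactly what cannot work: its relaxation time here is $\Theta(N)$, not $\Theta(N^2)$.
\end{itemize}

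The missing idea is the right formulation. The inequality must hold for \emph{every} $f:\Omega\to\R$, with the block-averaged variance $\Var[\pi]{\mathsf A f}$ on the left and the full Dirichlet form $\mathcal{E}(f,f)$ on the right. This is the quantity $\tholes$ that the coupled flow bounds by $256n^3$. With it, your variational formula closes without projecting $\mathcal{E}$ at all. For centered block-constant $g$ and any $h$,
\[
\langle g,h\rangle_\pi=\langle g,\mathsf A h\rangle_\pi\le\sqrt{\Var[\pi]{g}\,\Var[\pi]{\mathsf A h}}\le\sqrt{\tholes\,\Var[\pi]{g}\,\mathcal{E}(h,h)},
\]
hence $2\langle g,h\rangle_\pi-\mathcal{E}(h,h)\le\tholes\Var[\pi]{g}$. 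Taking the supremum over $h$ gives $\langle g,(I-P)^{-1}g\rangle_\pi\le\tholes\Var[\pi]{g}$. The paper runs the same Cauchy--Schwarz step on $u=(I-P)^{-1}g$. It then uses laziness (nonnegative autocorrelations) to get $\Var{\tfrac1m\sum_{t<m}g(X_t)}\le 2\tholes\Var[\pi]{g}/m$.

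Three smaller corrections:
\begin{itemize}
\item Only constant relative accuracy (e.g.\ $1/10$) is needed, and that is all your $O(n^5\log^2 n)$ count supports. Accuracy $1/\mathrm{poly}$ would cost a polynomial factor more.
\item A burn-in from a fixed start such as a maximum-activity perfect matching costs $O(n^5\log n)$, not $O(n^4\log n)$, because $\log(1/\pi(M_{\max}))=O(n\log n)$. This still fits the per-copy budget.
\item The telescoping estimator averages $\lambda_{i+1}(M)/\lambda_i(M)$ over $M\in\mathcal P$, which is not block-constant, so $\tholes$ gives no help there. You must invoke the estimator of \cite{BSVV08} with its $O(n^6\log^5 n\,\varepsilon^{-2})$ cost, as the paper does.
\end{itemize}
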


To explain the source of the improvement in running time, consider the weight-recalibration step in~\cite{BSVV08}.  Given weights $w(u,v)$ on all hole patterns $(u,v)\in V_1\times V_2$ that approximate the ideal weights $w^*(u,v)$ within a constant factor, we recalibrate the weights by generating sufficiently many samples from the Markov chain and then adjusting the weights so that the stationary masses of the $n^2+1$ hole patterns are approximately uniform.  This recalibration step only observes the \emph{hole pattern} of each sampled matching, namely, whether it is perfect, and if not, which pair $(u,v)\in V_1\times V_2$ is unmatched; it does not consider the entire matching.

Motivated by the local Poincar\'e inequalities of Chen et
al.~\cite{CFJ25}, we consider a Poincar\'e inequality restricted to a partition of the state space; for the application to the permanent, the partition is the $n^2+1$ hole patterns.  Whereas the relaxation time $\trelax$ (or equivalently the spectral gap) measures the decay rate of variance for any functional on the entire state space, the new restricted Poincar\'e inequality only considers functions that are averaged within each block of the partition (e.g., hole pattern).  We denote this weaker notion of relaxation time, which is \emph{restricted} to hole-pattern observables, by $\tholes$.  We prove that $\tholes=O(n^{3})$, improving on the best known bound of $\trelax=O(n^{4})$ for the full chain.  As a consequence, we obtain a factor-$n$ improvement in the correlation scale governing the hole-pattern frequency estimates in the \cite{BSVV08} weight-refinement step; this gives \cref{thm:main-permanent}.

It remains to bound this restricted relaxation time.  For this purpose, we refine the classical multicommodity-flow
argument used to bound the spectral gap of the chain.  A classical canonical-path or multicommodity-flow analysis of this chain must consider all pairs of states independently.  For instance, for a near-perfect matching $I$ and a perfect matching $F$, the flow from $I$ to $F$ follows the augmenting path and the even alternating cycles in their symmetric difference.  Here, we couple the conditional distribution of each near-perfect block $\mathcal{N}(u,v)$ with that of the perfect block $\mathcal{P}$ and route a canonical path between the two coupled matchings only along their unique augmenting path.  By avoiding the even cycles entirely, we save a factor of order $n$ in the subsequent congestion bound.  Our construction is inspired by the coupling-based canonical-path arguments developed in~\cite{CCFV2025,CFJ25}, though the setting and the functional inequality we target are different.

\paragraph{Organization.}
In the following section, we present an overview of the \fpras{} for the permanent and our improved running time.  In \cref{sec:restricted-poincare}, we recall the classical Poincar\'e inequality and relaxation time, introduce their restricted analogues, and present the coupled multicommodity-flow technique for bounding the restricted relaxation time.  In \cref{sec:improved-random-weight-refinement,sec:bound-for-congestion}, we apply this technique to the permanent algorithm.

\paragraph{AI Assistance.}
The high-level algorithmic ideas originated with the authors. GPT-5.6 Pro was used in exploratory discussions that helped formulate the partition-restricted relaxation time and conceptualize the general restricted-Poincaré and coupled-flow framework presented in \cref{sec:restricted-poincare}. The core text, final proofs, and permanent-specific arguments were written independently by the human authors, with ChatGPT utilized during the writing process to refine phrasing and provide editorial suggestions. The authors independently verified all mathematical claims and take full responsibility for the final content.

\section{Sketch of the Permanent Algorithm}
\label{sec:overview}

We now give a sketch of the permanent algorithm, which proves \cref{thm:main-permanent}. %}
We first recall the annealing framework of \cite{JSV04,BSVV08} and then isolate the weight-refinement step where our speedup occurs.
For simplicity, consider the permanent of an $n\times n$ $0/1$ matrix $A$, and let $G=(V_1\cup V_2,E)$ be the corresponding (unweighted) bipartite graph where $|V_1|=|V_2|=n$.

Let $\mathcal P$ be the set of perfect matchings in the complete bipartite graph on $(V_1,V_2)$.  For $u\in V_1$ and $v\in V_2$, let $\mathcal N(u,v)$ be the set of near-perfect matchings whose unmatched vertices, or \emph{holes}, are $u$ and $v$, and set $\Omega:=\mathcal P\cup\bigcup_{(u,v)\in V_1\times V_2}\mathcal N(u,v)$.
For a scalar activity $\lambda\in[0,1]$, assign activity $1$ to every edge of $G$ and activity $\lambda$ to every nonedge.  Extend these activities multiplicatively to a matching $M$ and additively to a set $S\subseteq\Omega$ by
\begin{equation*}
  \lambda(M):=\prod_{e\in M}\lambda(e),
  \qquad
  \lambda(S):=\sum_{M\in S}\lambda(M).
\end{equation*}
At the initial activity $\lambda_0=1$, the underlying weighted graph is complete, so $\lambda_0(\mathcal P)=n!$.  
The annealing process then decreases the activity through a sequence of positive values $\lambda_0,\lambda_1,\ldots,\lambda_\ell$, called the \emph{cooling schedule}, where $\lambda_\ell$ is sufficiently close to zero that $\lambda_\ell(\mathcal P)$ approximates $\operatorname{per}(A)$.
By the telescoping identity,
\begin{equation}
  \operatorname{per}(A)
  \approx
  \lambda_\ell(\mathcal P)
  =
  \lambda_0(\mathcal P)
  \prod_{i=0}^{\ell-1}
  \frac{\lambda_{i+1}(\mathcal P)}{\lambda_i(\mathcal P)}
  =
  n!
  \prod_{i=0}^{\ell-1}
  \frac{\lambda_{i+1}(\mathcal P)}{\lambda_i(\mathcal P)}.
  \label{eq:annealing-telescoping}
\end{equation}
Thus it suffices to estimate the ratio between the weighted numbers of perfect matchings at consecutive activities.

These ratios can be estimated using the JSV chain of~\cite{JSV04}, an ergodic Metropolis chain move between perfect and near-perfect matchings that will be defined formally in \cref{sec:metropolis-chain}.  
At phase $i$, assign a positive hole weight $w_i(u,v)$ to each near-perfect block and give a matching $M$ weight $\lambda_i(M)$ if $M\in\mathcal P$, and weight $w_i(u,v)\lambda_i(M)$ if $M\in\mathcal N(u,v)$.  If $\pi_{\lambda_i,w_i}$ denotes the resulting stationary distribution, then its conditional distribution on the perfect-matching block is
$\pi_{\lambda_i,w_i}(M\mid\mathcal P)=\frac{\lambda_i(M)}{\lambda_i(\mathcal P)}$ for $M\in\mathcal P$, independently of the choice of hole weights.  
Consequently,
\begin{equation*}
  \frac{\lambda_{i+1}(\mathcal P)}{\lambda_i(\mathcal P)}
  =
  \E[M\sim\pi_{\lambda_i,w_i}]{
    \left.
    \frac{\lambda_{i+1}(M)}{\lambda_i(M)}
    \,\right|\,
    M\in\mathcal P
  }.
\end{equation*}
Hence estimating the product in \cref{eq:annealing-telescoping} requires the chain to produce sufficiently many perfect-matching samples throughout the cooling schedule.

Although the conditional distribution above does not depend on the hole weights, the mixing of the full chain and the frequency with which it visits $\mathcal P$ do.  
The hole weights are therefore chosen to balance the stationary masses of the perfect and near-perfect blocks.  
For a fixed activity $\lambda$, the ideal hole weights are
\idealweights*

\noindent Under these weights, all $n^2+1$ hole-pattern blocks have the same stationary probability:
\begin{equation}
  \pi_{\lambda,w^*}(\mathcal N(u,v))
  =
  \pi_{\lambda,w^*}(\mathcal P)
  =
  \frac1{n^2+1}.
  \label{eq:ideal-weights-uniform}
\end{equation}
Moreover, the polynomial mixing guarantee of~\cite{JSV04} remains valid when the hole weights approximate the ideal weights within a constant factor.

At the initial activity, the ideal weights are known explicitly: $w_0^*(u,v)=n$ because $\lambda_0(\mathcal P)=n!$ and $\lambda_0(\mathcal N(u,v))=(n-1)!$.  As the activity changes, however, the ideal weights change as well.  The cooling schedule is chosen so that weights that are accurate at one activity remain rough at the next.  At each new activity, the algorithm refines these inherited rough weights into accurate weights before proceeding.  Weight refinement thereby maintains, throughout the schedule, the efficient sampling guarantee required by the telescoping estimator.

It remains to explain how to perform this refinement without knowing $\lambda(\mathcal P)$ or $\lambda(\mathcal N(u,v))$.  For the chain at a fixed activity $\lambda$ with current hole weights $w$,
\begin{equation*}
  \pi_{\lambda,w}(\mathcal P)
  \propto
  \lambda(\mathcal P),
  \qquad
  \pi_{\lambda,w}(\mathcal N(u,v))
  \propto
  w(u,v)\lambda(\mathcal N(u,v)),
\end{equation*}
where the same proportionality factor applies to every block.  Taking the ratio gives
\begin{equation}
  w^*(u,v)
  =
  \frac{\lambda(\mathcal P)}{\lambda(\mathcal N(u,v))}
  =
  w(u,v)
  \frac{\pi_{\lambda,w}(\mathcal P)}
       {\pi_{\lambda,w}(\mathcal N(u,v))}.
  \label{eq:refining-weights}
\end{equation}
Thus weight refinement reduces to simultaneously estimating the $n^2+1$ stationary block probabilities.  Relative approximations to these probabilities yield relative approximations to every ratio in \cref{eq:refining-weights}, and hence to every ideal hole weight.  We now formalize the approximation guarantees required of this refinement step.

A collection of hole weights $w$ is called \emph{rough} if
$w(u,v)/w^*(u,v)\in [1/2, 2]$ for every $(u,v)\in V_1\times V_2$,
and \emph{accurate} if $w(u,v)/w^*(u,v)\in [1/\sqrt{2}, \sqrt{2}]$ for every $(u,v)\in V_1\times V_2$. 
A randomized algorithm $\mathcal R$ is called a \emph{random weight-refinement algorithm} if, 
given the current activities $\lambda$, a collection of \emph{rough} hole weights $w$, 
and a failure parameter $0<\delta<1$, it outputs, with probability at least $1-\delta$, 
a collection of weights $w'$ that is \emph{accurate} with respect to the same activities.  
Its running time is denoted by $T_{\mathcal R}(n,\delta)$.

The speedup in our FPRAS comes from a faster random weight-refinement algorithm.  
For the JSV chain with rough hole weights, \cite{BSVV08} proved that $\trelax=O(n^4)$, thereby controlling the temporal correlations of arbitrary observables of the full matching.  
Weight refinement, however, uses only observables determined by the hole pattern,
such as in the estimator corresponding to \cref{eq:refining-weights}.
This motivates the hole-restricted relaxation time $\tholes$, formally defined in \cref{sec:restricted-poincare}.  
We prove in \cref{lem:restricted-poincare} that $\tholes=O(n^3)$, reducing the correlation scale relevant to weight refinement, and hence its running time, by a factor of $n$.
\begin{theorem}
    \label{thm:hole-weight-boosting}
    Suppose that $n\geq 2$ and that $G$ contains at least one perfect matching.  
    There exists a random weight-refinement algorithm $\mathcal R$ that applies to every choice of positive activities and, for every failure parameter $0<\delta<1$, runs in time
    \begin{equation*}
      T_{\mathcal R}(n,\delta)
      =O\left(
        n^5\log n\,\log\left(\frac{2n}{\delta}\right)
      \right).
    \end{equation*}
\end{theorem}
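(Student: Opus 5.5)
The plan is to run the JSV chain $P$ with the given rough weights $w$ from a warm start. We use its trajectory to estimate all $n^2+1$ block probabilities $\pi_{\lambda,w}(\mathcal P)$ and $\pi_{\lambda,w}(\mathcal N(u,v))$ simultaneously, via empirical frequencies of the hole pattern. We then set $w'(u,v)=w(u,v)\hat\pi(\mathcal P)/\hat\pi(\mathcal N(u,v))$, following \cref{eq:refining-weights}.

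Since $w$ is rough, each block has stationary mass $\Theta(1/n^2)$: a factor-$2$ perturbation of the ideal weights changes each block mass in \cref{eq:ideal-weights-uniform} by at most a factor $4$. To make $w'$ accurate it therefore suffices to estimate each block probability within relative error $1\pm c$ for a small constant $c$ (e.g.\ $c=0.08$), since $\sqrt2\ge (1+c)/(1-c)$ then holds.

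The key quantitative input is the restricted relaxation time $\tholes=O(n^3)$ from \cref{lem:restricted-poincare}. I will use it as follows. The indicator $f_B=\mathbf 1[M\in B]$ of a block $B$ is a hole-pattern observable, i.e.\ it is constant on blocks. The restricted Poincar\'e inequality should give, for every such $f$, the variance bound for the time average $\hat f_T=\frac1T\sum_{t<T}f(X_t)$ under stationarity:
\[
\Var{\hat f_T}\le \frac{O(\tholes)}{T}\Var[\pi]{f}.
\]
This is the usual bound with $\trel$ replaced by $\tholes$. The justification I expect to need is that hole-pattern functions satisfy $\mathcal E(f,f)\ge \Var[\pi]{f}/\tholes$, so the spectral-measure argument (or a Chebyshev-type bound via autocorrelations restricted to the invariant subspace) goes through. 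I expect this step to be the main obstacle. The autocorrelation $\langle f,P^tf\rangle$ involves $P^tf$, which is no longer block-constant, so one cannot iterate the restricted inequality naively. The first thing I would try is a direct bound on $\sum_t\langle f,P^tf\rangle_\pi$ through the resolvent, writing $\sum_t \mathrm{Cov}(f(X_0),f(X_t))=\langle f,(I-P)^{-1}f\rangle$ for mean-zero $f$ (using laziness for positivity). I would then apply the variational formula $\langle f,(I-P)^{-1}f\rangle=\sup_g\{2\langle f,g\rangle-\mathcal E(g,g)\}$, and control it by projecting $g$ onto block-constant functions, using that $f$ is block-constant so that $\langle f,g\rangle=\langle f,\bar g\rangle$ and $\mathcal E(\bar g,\bar g)\lesssim\mathcal E(g,g)$. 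If that projection inequality is not available, I would instead rely on the form in which \cref{lem:restricted-poincare} is presumably stated, since the restricted relaxation time is likely defined precisely to control this quantity.

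Given this bound, $\Var[\pi]{f_B}\le\pi(B)=\Theta(n^{-2})$, so Chebyshev gives constant-relative-error estimates of $\pi(B)$ with probability $3/4$ once $T=O(\tholes n^2)=O(n^5)$. For the burn-in, the full mixing time $O(n^4\log n)$ from \cite{BSVV08} (initial state a perfect matching, whose stationary mass is $\Omega(n^{-2})$, i.e.\ a warm-ish start of log-distortion $O(n\log n)$) costs only $O(n^4 \cdot n\log n)$ at worst, and it is absorbed. Alternatively, I would take $T$ with an extra $\log n$ factor to dominate the nonstationary start, which yields the $n^5\log n$ term.

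Finally, I would boost via the median trick. Run $O(\log(2n/\delta))$ independent copies; for each of the $n^2+1$ blocks take the median of its estimates. By Chernoff, each median fails with probability at most $\delta/(n^2+1)$, and a union bound over blocks then gives all-accurate weights with probability $1-\delta$. Each chain step costs $O(1)$ time. The total is $O(n^5\log n\log(2n/\delta))$, as in \cref{thm:hole-weight-boosting}.
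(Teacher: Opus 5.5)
Your assembly is the paper's proof. It runs a burn-in via the \cite{BSVV08} mixing bound, then one trajectory of length $O(\tholes n^2)=O(n^5)$ per copy feeding all $n^2+1$ block counters, applies Chebyshev with $\pi(B)\ge 1/(4(n^2+1))$, and finishes with per-block medians over $O(\log((n^2+1)/\delta))$ copies and a union bound. The gap is in the step you flag as the crux, namely $\Var{\frac1T\sum_{t<T}f(X_t)}\le 2\tholes\Var[\pi]{f}/T$ for block-constant $f$; the route you sketch for it fails. The property you say you need, $\mathcal E(f,f)\ge\Var[\pi]{f}/\tholes$ for block-constant $f$, is only the Poincar\'e inequality of the projection chain. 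As the paper's remark on projection chains explains, that inequality does not control autocorrelations, because the hidden state inside a block carries memory. The contraction $\mathcal E(\bar g,\bar g)\lesssim\mathcal E(g,g)$ is also false in general. Suppose a block $\{x_1,x_2\}$ has $x_2$ attached to the rest of the chain only through a tiny-capacity transition to $x_1$, while $x_1$ has a heavy transition to a state $y$ in another block. Then $g=\one_{x_2}$ has tiny Dirichlet form, but $\bar g$ jumps by $\pi(x_2)/\pi(\{x_1,x_2\})$ across the heavy edge. The valid weakening $\mathcal E(\bar g,\bar g)\le\Var[\pi]{\bar g}\le\tholes\,\mathcal E(g,g)$ (by laziness) only reduces you to the projection chain. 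You are then left with the product of $\tholes$ and the projection-chain relaxation time, which destroys the factor-$n$ gain.

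The missing idea is that the restricted Poincar\'e inequality (\cref{def:restricted-relax}) holds for \emph{arbitrary} $g$, with the block average on the left: $\Var[\pi]{\mathsf A_{\+S}g}\le\tholes\,\mathcal E(g,g)$. Keep your resolvent formula and your observation $\inner{f}{g}=\inner{f}{\mathsf A_{\+S}g}$, but close with Cauchy--Schwarz instead of projecting the Dirichlet form. For centered block-constant $f$,
\[
2\inner{f}{g}-\mathcal E(g,g)\le 2\sqrt{\Var[\pi]{f}\,\tholes\,\mathcal E(g,g)}-\mathcal E(g,g)\le\tholes\Var[\pi]{f},
\]
so $\sum_{t\ge0}\inner{f}{P^tf}=\inner{f}{(I-P)^{-1}f}\le\tholes\Var[\pi]{f}$. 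This is \cref{item:restricted-resampling-restricted} of \cref{thm:restricted-resampling}, which the paper proves separately (working with $u=(I-P)^{-1}g$ directly) and simply invokes here.

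A smaller point concerns the burn-in. It must start from a maximum-activity perfect matching $M_{\max}$, because the \cite{BSVV08} bound depends on $\log(1/\pi(X_0))$ for the single starting state, not on the mass of the block $\+P$. The inequality $\lambda(\+P)\le n!\,\lambda(M_{\max})$ gives $\pi(M_{\max})\ge 1/((1+2n^2)n!)$, hence an $O(n^5\log n)$ burn-in. An arbitrary perfect matching can have arbitrarily small mass under general positive activities.
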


\begin{proof}[Proof of \cref{thm:main-permanent}]
  The cases $n=1$ and $\operatorname{per}(A)=0$ are handled directly, the latter being detected by bipartite matching.  We may therefore assume that $n\geq2$ and that $G$ contains a perfect matching.

  Let $\mathcal R$ be the weight-refinement algorithm from \cref{thm:hole-weight-boosting}, and use it within the accelerated cooling procedure of~\cite{BSVV08}.  The explicit bound in the proof of~\cite[Lemma~3.1]{BSVV08}, with $s=n$, $c=\sqrt2$, $\gamma=n!$, and $D=1$, gives at most $48n(\log n)^2$ activity values, and hence at most that many refinement rounds.  At each activity, including the final activity $1/n!$, apply $\mathcal R$ to replace the current rough weights by accurate weights and store the result.  By~\cite[Lemma~3.2 and equations~(8.1)--(8.2)]{BSVV08}, every ideal hole weight changes by at most a factor $\sqrt2$ between consecutive activities.  Hence, if $w'$ is accurate at the current activity, then at the next activity
  \begin{equation*}
    \frac{w'(u,v)}{w^*_{\mathrm{next}}(u,v)}
    =
    \frac{w'(u,v)}{w^*_{\mathrm{current}}(u,v)}
    \frac{w^*_{\mathrm{current}}(u,v)}{w^*_{\mathrm{next}}(u,v)}
    \in\left[\frac12,2\right].
  \end{equation*}
  Thus accurate weights at one activity are rough at the next, and the hypothesis of \cref{thm:hole-weight-boosting} propagates throughout the schedule.

  Set the failure probability of each refinement call to $\eta:=1/(384n(\log n)^2)$.  Conditional on all preceding calls succeeding, the next input is rough by the preceding argument.  A first-failure union bound therefore shows that all calls succeed with probability at least $7/8$.  Moreover,
  \begin{equation*}
    T_{\mathcal R}(n,\eta)
    =O(n^5\log^2 n),
  \end{equation*}
  so the total cost of weight refinement is $O(n^6\log^4 n)$.

  Conditional on successful refinement, the telescoping-product estimator of~\cite[Section~4.5]{BSVV08} returns a multiplicative $(1\pm\varepsilon)$ approximation in time $O(n^6\log^5 n\,\varepsilon^{-2})$ with probability at least $7/8$, after a constant number of repetitions.  Thus one complete execution succeeds with probability at least $3/4$ and runs in time $O(n^6\log^5 n\,\varepsilon^{-2})$.
  Taking the median of $O(\log(2/\delta))$ independent executions reduces the failure probability to at most $\delta$, giving total running time
  \begin{equation*}
    O\left(
      \frac{n^6}{\varepsilon^2}
      \log^5 n\,
      \log\left(\frac{2}{\delta}\right)
    \right).
  \end{equation*}
  The extension to nonnegative matrices follows from the weighted cooling reduction of~\cite[Section~9]{BSVV08}, since \cref{thm:hole-weight-boosting} applies to arbitrary positive activities.
\end{proof}

\section{Poincar\'e and Restricted Poincar\'e Inequalities} \label{sec:restricted-poincare}

The algorithmic improvement comes from distinguishing relaxation of the full matching chain from the relaxation visible through a prescribed partition—namely, 
the partition by hole patterns in our application to the permanent approximation. 
We begin by recalling the standard Poincar\'e inequality and then develop an abstract partition-restricted analogue. 
After deriving the corresponding sampling guarantees, we explain how to bound the standard and restricted relaxation times using multicommodity flows and coupled flows, respectively.

\subsection{The Standard Poincar\'e Inequality}

Throughout this section, 
let $P$ be the transition matrix of a finite irreducible reversible Markov chain on $\Omega$, and let $\pi$ be its stationary distribution.  
For a function $f:\Omega\to\R$, write
\begin{equation*}
  \E[\pi]{f}:=\sum_{x\in\Omega}\pi(x)f(x).
\end{equation*}
We also regard $P$ as an operator on functions, so that $(Pf)(x):=\sum_{y\in\Omega}P(x,y)f(y)$.
The global fluctuation of $f$ under stationarity is measured by the \emph{variance},
which is defined as
\begin{equation*}
  \Var[\pi]{f}
  :=\sum_{x\in\Omega}\pi(x)\bigl(f(x)-\E[\pi]{f}\bigr)^2
  =\frac12\sum_{x,y\in\Omega}\pi(x)\pi(y)\bigl(f(x)-f(y)\bigr)^2.
\end{equation*}
By contrast, the \emph{Dirichlet form}
\begin{equation*}
  \+E_P(f,f)
  :=\frac12\sum_{x,y\in\Omega}\pi(x)P(x,y)\bigl(f(x)-f(y)\bigr)^2
\end{equation*}
measures the variation seen across a single transition of the chain.
The Poincar\'e inequality asserts that this local variation controls the full stationary variance.

\begin{definition}[Poincar\'e inequality]
    We say the Markov chain $P$ with stationary distribution $\pi$ satisfies the \emph{Poincar\'e inequality} with constant $C$ if
    \begin{equation*}
      \forall f:\Omega\to\^R,\qquad
      \Var[\pi]{f}\leq C\,\+E_P(f,f).
    \end{equation*}
    The \emph{relaxation time} of $P$, denoted by $\trel$, is the minimum constant $C$ satisfying the Poincar\'e inequality.  Equivalently,
    \begin{equation*}
      \trel
      :=\sup_{\substack{f:\Omega\to\R\\\Var[\pi]{f}>0}}
      \frac{\Var[\pi]{f}}{\+E_P(f,f)}.
    \end{equation*}
\end{definition}
  
The full relaxation time is governed by the worst observable, including fluctuations that remain entirely within one block and are invisible to the statistic being estimated. 
To discard these fluctuations, we next average observables within the blocks of a fixed partition.

\subsection{The Restricted Poincar\'e Inequality}

Let $\+S=\{S_1,\ldots,S_k\}$ be a partition of $\Omega$, whose elements we call \emph{blocks}.  For $1\leq i\leq k$, let $\pi_i:=\pi(\,\cdot\mid S_i)$ be the stationary distribution conditioned on $S_i$.  For an arbitrary function $f:\Omega\to\R$, define its block average by
\begin{equation*}
  (\mathsf A_{\+S}f)(x)
  :=\E[X\sim\pi]{f(X)\mid X\in S_i}
  =\E[\pi_i]{f}
  \qquad (x\in S_i).
\end{equation*}
Thus $\mathsf A_{\+S}f$ retains the conditional mean of $f$ on each block and discards all variation within that block.  The law of total variance makes this separation explicit:
\begin{equation*}
  \Var[\pi]{f}
  =\sum_{i=1}^k \pi(S_i)\Var[\pi_i]{f}
  +\Var[\pi]{\mathsf A_{\+S}f}.
\end{equation*}
The first term records fluctuations within the blocks, 
whereas the second records only the variation among their conditional means.  We denote the latter quantity by
\begin{equation}
  \Var[\+S]{f}
  :=\Var[\pi]{\mathsf A_{\+S}f}
  =\frac12\sum_{i,j=1}^k \pi(S_i)\pi(S_j)
  \left(\E[\pi_i]{f}-\E[\pi_j]{f}\right)^2.
  \label{eq:restricted-variance-pairwise}
\end{equation}
This is the variance relevant to an observer who sees the block containing the state but not the state itself.  
It therefore suggests the following restricted analogue of the standard Poincar\'e inequality.

\begin{definition}[Restricted Poincar\'e inequality]
\label{def:restricted-relax}
  We say the Markov chain $P$ with stationary distribution $\pi$ satisfies the \emph{restricted Poincar\'e inequality} with constant $C$ under the partition $\+S$ if
  \begin{equation*}
    \forall f:\Omega\to\^R,\qquad
    \Var[\+S]{f}\leq C\,\+E_P(f,f).
  \end{equation*}
  The \emph{restricted relaxation time} under $\+S$, denoted by $\tau_{\+S}$, is the minimum constant $C$ satisfying the restricted Poincar\'e inequality.  Equivalently,
  \begin{equation*}
    \tau_{\+S} :=\sup_{\substack{f:\Omega\to\^R\\\Var[\+S]{f}>0}} \frac{\Var[\+S]{f}}{\+E_P(f,f)}.
  \end{equation*}
\end{definition}

The total-variance decomposition gives $\Var[\+S]{f}\leq\Var[\pi]{f}$ and hence $\tau_{\+S}\leq\trel$.  
Equality holds for the partition into singleton blocks, because then $\mathsf A_{\+S}$ is the identity operator.  
For a coarser partition, however, $\tau_{\+S}$ may be much smaller: it ignores slow modes that change the state without changing its block.

\subsection{Sampling Consequences}

To connect these inequalities with sampling, let
\begin{equation*}
  \+F_{\+S}
  :=\set{g:\Omega\to\R: g(x)=g(y)\text{ whenever }x,y\in S_i\text{ for some }i\in [k]}
\end{equation*}
be the space of block-constant observables.  For every $f\in\+F_{\+S}$, we have $\mathsf A_{\+S}f=f$ and hence $\Var[\+S]{f}=\Var[\pi]{f}$.
We additionally assume in this subsection that the spectrum of $P$ is nonnegative.  
In particular, this condition holds when the chain is lazy, meaning that $P(x,x)\geq1/2$ for every $x\in\Omega$.  
The standard relaxation time controls the dependence among samples of an arbitrary observable.  
The restricted relaxation time provides the analogous, potentially sharper control for block-constant observables.  
The following theorem makes this comparison precise.

\begin{theorem}[Variance of stationary empirical averages]
  \label{thm:restricted-resampling}
  Let $X_0\sim\pi$, and let $X_1,\ldots,X_{m-1}$ be generated by a finite irreducible reversible Markov chain whose spectrum is nonnegative and whose stationary distribution is $\pi$.
  \begin{enumerate}[label=(\roman*)]
    \item\label{item:restricted-resampling-standard} For every $f:\Omega\to\R$ and every integer $m\geq1$,
    \begin{equation*}
      \Var{\frac1m\sum_{t=0}^{m-1}f(X_t)}
      \leq
      \frac{2\trel}{m}\Var[\pi]{f}.
    \end{equation*}
    \item\label{item:restricted-resampling-restricted} For every partition $\+S$ of $\Omega$, every $g\in\+F_{\+S}$, and every integer $m\geq1$,
    \begin{equation*}
      \Var{\frac1m\sum_{t=0}^{m-1}g(X_t)}
      \leq
      \frac{2\tau_{\+S}}{m}\Var[\pi]{g}.
    \end{equation*}
  \end{enumerate}
\end{theorem}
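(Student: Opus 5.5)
Both parts follow from one spectral computation for stationary reversible chains; the only difference lies in which Poincaré inequality controls the relevant spectral measure. Set $\bar f:=f-\E[\pi]{f}$ and expand the variance of the empirical average:
\begin{equation*}
  \Var{\frac1m\sum_{t=0}^{m-1}f(X_t)}
  =\frac1{m^2}\sum_{s,t=0}^{m-1}\langle \bar f,P^{|t-s|}\bar f\rangle_\pi
  \leq\frac1m\Bigl(\langle\bar f,\bar f\rangle_\pi+2\sum_{k\geq1}\langle \bar f,P^{k}\bar f\rangle_\pi\Bigr).
\end{equation*}
This uses stationarity and reversibility: $\mathrm{Cov}(f(X_s),f(X_t))=\langle\bar f,P^{|t-s|}\bar f\rangle_\pi$, and each term is nonnegative because the spectrum is nonnegative, which justifies extending the sum. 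Expand $\bar f$ in an orthonormal eigenbasis of $P$ in $L^2(\pi)$, with eigenvalues $\beta_j\in[0,1)$ on the orthogonal complement of constants and spectral weights $c_j=\langle\bar f,\phi_j\rangle_\pi^2$. The bracket then equals $\sum_j c_j\frac{1+\beta_j}{1-\beta_j}\leq 2\sum_j \frac{c_j}{1-\beta_j}=2\langle \bar f,(I-P)^{-1}\bar f\rangle_\pi$. So in both parts it suffices to bound $\langle \bar f,(I-P)^{-1}\bar f\rangle_\pi$ by $\trel\Var[\pi]{f}$ or by $\tau_{\+S}\Var[\pi]{g}$, respectively.

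For part~\ref{item:restricted-resampling-standard}, the Poincaré inequality gives $1-\beta_j\geq 1/\trel$ for every nonconstant eigenfunction. Hence $\langle \bar f,(I-P)^{-1}\bar f\rangle_\pi\leq \trel\sum_j c_j=\trel\Var[\pi]{f}$, and part~\ref{item:restricted-resampling-standard} follows.

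For part~\ref{item:restricted-resampling-restricted}, a spectral-gap bound is not available, so I will use a variational argument. Let $h:=(I-P)^{-1}\bar g$, which is well defined since $\bar g$ has mean zero. Then
\begin{equation*}
  Q:=\langle \bar g,(I-P)^{-1}\bar g\rangle_\pi=\langle \bar g,h\rangle_\pi=\+E_P(h,h),
\end{equation*}
where the last equality is the standard identity $\langle h,(I-P)h\rangle_\pi=\+E_P(h,h)$. The key step is that $\bar g$ is block-constant, so $\bar g\in\+F_{\+S}$ and $\mathsf A_{\+S}$ is a $\pi$-orthogonal projection onto $\+F_{\+S}$ (a conditional expectation). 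Therefore $\langle\bar g,h\rangle_\pi=\langle \bar g,\mathsf A_{\+S}h\rangle_\pi$. By Cauchy–Schwarz, and since $\mathsf A_{\+S}h$ has mean zero,
\begin{equation*}
  Q\leq \|\bar g\|_\pi\,\|\mathsf A_{\+S}h\|_\pi=\sqrt{\Var[\pi]{g}}\sqrt{\Var[\+S]{h}}\leq \sqrt{\Var[\pi]{g}}\sqrt{\tau_{\+S}\,\+E_P(h,h)}=\sqrt{\Var[\pi]{g}\,\tau_{\+S}\,Q}.
\end{equation*}
Squaring and dividing by $Q$ gives $Q\leq\tau_{\+S}\Var[\pi]{g}$, which yields part~\ref{item:restricted-resampling-restricted}. (If $Q=0$, the bound is trivial.)

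I expect the main obstacle to be this step, where a Poincaré inequality valid only for projected variances must be converted into a resolvent bound. The duality trick above, which moves the projection onto $h=(I-P)^{-1}\bar g$, handles this cleanly. Everything else is routine: justifying the infinite sum for the finite chain, and using the factor $(1+\beta)/(1-\beta)\leq 2/(1-\beta)$.
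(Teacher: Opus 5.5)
Your proposal is correct and is essentially the paper's proof. It uses the same lag-covariance expansion with nonnegative $\langle \bar g,P^s\bar g\rangle_\pi$, and the same duality step for (ii): set $h=(I-P)^{-1}\bar g$, use $\langle \bar g,h\rangle_\pi=\langle \bar g,\mathsf A_{\mathcal S}h\rangle_\pi$ and $\langle\bar g,h\rangle_\pi=\mathcal E_P(h,h)$, apply Cauchy--Schwarz and the restricted Poincar\'e inequality, and cancel. The only cosmetic differences are that you prove (i) directly from the spectral gap rather than as the singleton-partition case of (ii), and you obtain the factor $2$ via $(1+\beta)/(1-\beta)\le 2/(1-\beta)$ instead of bounding $\langle g,g\rangle_\pi+2\sum_{s\ge1}\langle g,P^sg\rangle_\pi$ by $2\sum_{s\ge0}\langle g,P^sg\rangle_\pi$.
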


The proof of \cref{thm:restricted-resampling}, deferred to \Cref{sec:proof-restricted-resampling}, bounds the integrated covariance of a block-constant observable using the restricted Poincar\'e inequality.
It is worth noting that \cref{item:restricted-resampling-restricted} generalizes \cref{item:restricted-resampling-standard}: the latter is obtained by taking $\+S$ to be the partition of $\Omega$ into singleton blocks.
We state \cref{item:restricted-resampling-standard} separately to emphasize the comparison with the standard relaxation time.
  
The conclusion of \cref{item:restricted-resampling-restricted} does not assert that the full chain mixes within $\tau_{\+S}$ steps.  
Rather, it bounds the cumulative temporal dependence visible to block-constant observables: a trajectory of length $m$ provides an effective sample size of order $m/\tau_{\+S}$ for such observables.  
This is precisely the weaker form of decorrelation needed when the estimator records only the block visited by the chain.  This is captured more concretely in the following corollary.

\begin{corollary}[Mean estimation with stationary initialization]
    \label{cor:restricted-mean-estimation}
    Consider a finite irreducible reversible Markov chain with nonnegative spectrum and stationary distribution $\pi$.
    Let $\+S=\{S_1,\ldots,S_k\}$ be a partition of $\Omega$.
    Let $0<\varepsilon,\delta\leq1$.
    \begin{enumerate}[label=(\alph*)]
      \item Let $f:\Omega\to[0,1]$ satisfy $\E[\pi]{f}>0$.  From independent trajectories, each initialized with $X_0\sim\pi$, one can obtain a relative-$\varepsilon$ estimate of $\E[\pi]{f}$ with failure probability at most $\delta$ and total trajectory length
      \begin{equation*}
        m=O\tuple{
          \frac{\trel}
          {\E[\pi]{f}\varepsilon^2}
          \log(2/\delta)}.
      \end{equation*}
  
      \item\label{item:restricted-mean-estimation} 
      Let $g\in\+F_{\+S}$ take values in $[0,1]$ and satisfy $\E[\pi]{g}>0$.  
      From independent trajectories, each initialized with $X_0\sim\pi$, one can obtain a relative-$\varepsilon$ estimate of $\E[\pi]{g}$ with failure probability at most $\delta$ and total trajectory length
      \begin{equation*}
       m=O\tuple{
          \frac{\tau_{\+S}}
          {\E[\pi]{g}\varepsilon^2}
          \log(2/\delta)}.
      \end{equation*}

    \end{enumerate}
\end{corollary}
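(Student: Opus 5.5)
The plan is the standard ``Chebyshev on a block, then median of means'' argument, with \cref{thm:restricted-resampling} supplying the variance bound. Part (a) is the special case of part (b) with the singleton partition, where $\tau_{\+S}=\trel$ and $\+F_{\+S}$ contains every function. So it suffices to prove (b).

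Write $\mu:=\E[\pi]{g}$ and fix a trajectory length $L:=\lceil C\tau_{\+S}/(\mu\varepsilon^2)\rceil$ for a suitable absolute constant $C$. Run a trajectory $X_0\sim\pi,X_1,\ldots,X_{L-1}$ and form the empirical mean $\hat\mu:=\frac1L\sum_{t<L}g(X_t)$. Stationarity gives $\E{\hat\mu}=\mu$. Since $g\in\+F_{\+S}$, \cref{item:restricted-resampling-restricted} gives
\[
\Var{\hat\mu}\le\frac{2\tau_{\+S}}{L}\Var[\pi]{g}.
\]
Because $0\le g\le1$, we have $\Var[\pi]{g}\le\E[\pi]{g^2}\le\E[\pi]{g}=\mu$. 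Hence $\Var{\hat\mu}\le 2\tau_{\+S}\mu/L\le 2\mu^2\varepsilon^2/C$. Chebyshev's inequality then yields
\[
\Pr{\abs{\hat\mu-\mu}>\varepsilon\mu}\le\frac{2}{C}\le\frac14
\]
once $C\ge8$. This is the key step: the bound $\Var[\pi]{g}\le\mu$ is what produces the $1/\mu$ (rather than $1/\mu^2$) dependence in $m$. It is also exactly where the restricted relaxation time replaces $\trel$.

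Next, boost the success probability. Run $r:=\lceil c\log(2/\delta)\rceil$ independent trajectories, each initialized from $\pi$ and each of length $L$, and output the median of the resulting estimates $\hat\mu^{(1)},\ldots,\hat\mu^{(r)}$. The median falls outside $[(1-\varepsilon)\mu,(1+\varepsilon)\mu]$ only if at least half of the independent estimates fail, and each fails with probability at most $1/4$. By a Chernoff/Hoeffding bound this happens with probability at most $\exp(-r/8)\le\delta$ for $c$ large enough. The total trajectory length is
\[
rL=O\tuple{\frac{\tau_{\+S}}{\mu\varepsilon^2}\log(2/\delta)},
\]
where the ceiling in $L$ is absorbed because $\tau_{\+S}/(\mu\varepsilon^2)$ is bounded below by a constant. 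For that lower bound I would need $\tau_{\+S}$ bounded below in the nondegenerate case, for instance via the test function $g$ itself, using $\+E_P(g,g)\le\Var[\pi]{g}$ under a nonnegative spectrum. Alternatively, one can simply state the bound as $O(\max\{1,\cdot\})$.

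The only point needing care is that the argument works directly with the stationary start assumed in the statement, so no burn-in is needed. All randomness across the repetitions is independent by construction. I do not expect a genuine obstacle here: the substantive work lies in \cref{thm:restricted-resampling}, and the remaining concern is the bookkeeping of constants together with the bound $\Var[\pi]{g}\le\mu$.
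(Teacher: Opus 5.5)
Your proposal is correct and is essentially the paper's argument. The paper likewise combines \cref{item:restricted-resampling-restricted} of \cref{thm:restricted-resampling} with $\Var[\pi]{g}\le\mu$, applies Chebyshev to one stationary trajectory of length $\Theta(\tau_{\+S}/(\mu\varepsilon^2))$ to get constant failure probability, and takes the median of $O(\log(2/\delta))$ independent repetitions. Your bookkeeping remarks are also sound: part (a) follows from the singleton partition, and the ceiling is absorbed via $\tau_{\+S}\ge1$. For that lower bound, use a block indicator as the test function rather than $g$, since $g$ may be constant.
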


\begin{proof}
  We first prove \cref{item:restricted-mean-estimation}.
  Write $\mu=\E[\pi]{g}$ and consider the empirical mean along a trajectory of length $m_0$ starting from $X_0\sim\pi$.
  By \cref{item:restricted-resampling-restricted} of \cref{thm:restricted-resampling} and the bound $\Var[\pi]{g}\leq\mu$,
  \begin{equation*}
    \Var{\frac1{m_0}\sum_{t=0}^{m_0-1}g(X_t)}
    \leq
    \frac{2\tau_{\+S}\mu}{m_0}.
  \end{equation*}
  Taking a trajectory of length $m_0\geq16\tau_{\+S}/(\mu\varepsilon^2)$ and applying Chebyshev's inequality gives relative error at most $\varepsilon$ with failure probability at most $1/8$.
  Repeating this independently $O(\log(2/\delta))$ times and taking the median gives the stated total trajectory length and proves \cref{item:restricted-mean-estimation}.
  Part~(a) follows identically, using \cref{item:restricted-resampling-standard} and $\trel$ in place of \cref{item:restricted-resampling-restricted} and $\tau_{\+S}$.
\end{proof}

In contrast to the restricted relaxation time introduced above, 
it is intuitively appealing to consider the relaxation time of the corresponding projection chain.  
However, as pointed out in the following remark, the projection time is not the correct notion to consider in this context.

\begin{remark}[Comparison with projection chain]
    Given a partition $\+S=\{S_1,\ldots,S_k\}$ of $\Omega$, one may form the projection chain that records only the block containing the current state.  
    Its transition matrix is
    \begin{equation*}
      \overline P(i,j)
      :=
      \sum_{x\in S_i}\pi_i(x)P(x,S_j),
      \qquad
      P(x,S_j):=\sum_{y\in S_j}P(x,y).
    \end{equation*}
    The projection chain is reversible with stationary distribution $(\pi(S_i))_{i=1}^k$.  
    Lifting a function on the blocks to a block-constant function on $\Omega$ gives the variational representation
    \begin{equation*}
      \tau_{\mathrm{proj}(\+S)}
      =
      \sup_{\substack{
        g\in\+F_{\+S}\\
        \Var[\pi]{g}>0
      }}
      \frac{\Var[\pi]{g}}{\+E_P(g,g)}.
    \end{equation*}
    Consequently,
    \begin{equation*}
      \tau_{\mathrm{proj}(\+S)}
      \leq
      \tau_{\+S}
      \leq
      \trel.
    \end{equation*}
    The first inequality follows by restricting the variational definition of $\tau_{\+S}$ to block-constant functions, for which $\Var[\+S]{g}=\Var[\pi]{g}$.  
    The second follows from $\Var[\+S]{f}\leq\Var[\pi]{f}$.
  
    The projection-chain relaxation time alone does not imply the sampling bound in \cref{item:restricted-resampling-restricted} of \cref{thm:restricted-resampling}.
    Although $\overline P$ gives the correct one-step transition probabilities between blocks, the observed process of block labels need not be Markov.  
    Conditional on the entire observed history, the hidden state in the current block $S_i$ need not have distribution $\pi_i$.  
    In effect, the projection chain averages away this hidden state after every coarse transition, whereas the original chain may retain information within the block.  
    The definition of $\tau_{\+S}$ captures this hidden memory by allowing the test function to depend on the full state while retaining only its block average in the variance.
\end{remark}

\subsection{Multicommodity Flows}
\label{subsec:multicommodity-flows}

The sampling consequences above reduce the algorithmic problem to bounding relaxation times. 
We begin with the classical \emph{multicommodity-flow} method for $\trel$: it routes the pairwise differences appearing in the variance through transitions of the chain and compares the induced load with the Dirichlet form. 
This formulation will then be adapted to the restricted setting.

Let
\begin{equation*}
  \+T:=\set{\{x,y\}\subseteq\Omega:x\neq y\text{ and }P(x,y)>0}
\end{equation*}
be the set of undirected non-loop transitions.  For $e=\{x,y\}\in\+T$, reversibility allows us to define its capacity by
\begin{equation*}
  Q(e):=\pi(x)P(x,y)=\pi(y)P(y,x).
\end{equation*}
Because self-loops make no contribution, the Dirichlet form can be written as
\begin{equation}
  \+E_P(f,f)
  =
  \sum_{e=\{x,y\}\in\+T}Q(e)\bigl(f(x)-f(y)\bigr)^2.
  \label{eq:flow-dirichlet}
\end{equation}

For distinct $x,y\in\Omega$, let $\operatorname{Paths}(x,y)$ be the collection of simple paths from $x$ to $y$ in the transition graph.  
We may restrict attention to simple paths,
since removing a cycle from a path can only decrease its length and its load on every transition.

A multicommodity flow $\Phi$ assigns a nonnegative value $\Phi_{x,y}(\gamma)$ to every $\gamma\in\operatorname{Paths}(x,y)$ and every unordered pair $\{x,y\}\in\binom{\Omega}{2}$.  
The flow must route the demand $\pi(x)\pi(y)$ between each pair, i.e.,
\begin{equation}
  \sum_{\gamma\in\operatorname{Paths}(x,y)}
  \Phi_{x,y}(\gamma)
  =
  \pi(x)\pi(y).
  \label{eq:standard-flow-demand}
\end{equation}
The demand may be divided among several paths.
The \emph{canonical-paths} method is the special case in which all of the demand $\pi(x)\pi(y)$ is routed along one path $\gamma_{x,y}$.

The length-weighted \emph{congestion} of $\Phi$ is
\begin{equation*}
  \rho(\Phi)
  :=
  \max_{e\in\+T}
  \frac1{Q(e)}
  \sum_{\{x,y\}\in\binom{\Omega}{2}}
  \sum_{\substack{\gamma\in\operatorname{Paths}(x,y)\\e\in\gamma}}
  \Phi_{x,y}(\gamma)\abs{\gamma}.
\end{equation*}
Sinclair's standard multicommodity-flow bound~\cite{Sinclair92} asserts that the relaxation time is at most the congestion of any feasible multicommodity flow.
\begin{lemma}[\cite{Sinclair92}]
  \label{lem:sinclair-flow}
  For every multicommodity flow satisfying \cref{eq:standard-flow-demand},
  \begin{equation*}
    \trel\leq\rho(\Phi).
  \end{equation*}
\end{lemma}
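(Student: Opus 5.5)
We prove the inequality $\Var[\pi]{f}\leq\rho(\Phi)\,\+E_P(f,f)$ for every $f:\Omega\to\R$, using the standard Cauchy--Schwarz argument. The plan is to express each pairwise difference in the variance as a telescoping sum along the flow paths, bound its square by the path length times a sum of squared edge differences, and then exchange the order of summation so that the per-edge load appears.

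First, we rewrite the variance in pairwise form over unordered pairs:
\begin{equation*}
  \Var[\pi]{f}=\sum_{\{x,y\}\in\binom{\Omega}{2}}\pi(x)\pi(y)\bigl(f(x)-f(y)\bigr)^2 .
\end{equation*}
This follows from the ordered-pair formula given earlier, where the diagonal terms vanish and each unordered pair is counted twice. By the demand condition \cref{eq:standard-flow-demand}, we may replace $\pi(x)\pi(y)$ by $\sum_{\gamma\in\operatorname{Paths}(x,y)}\Phi_{x,y}(\gamma)$.

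Next, fix a path $\gamma=(x=z_0,z_1,\ldots,z_\ell=y)$, so that $\abs{\gamma}=\ell$. Telescoping gives $f(x)-f(y)=\sum_{i}\bigl(f(z_{i-1})-f(z_i)\bigr)$, and Cauchy--Schwarz then yields
\begin{equation*}
  \bigl(f(x)-f(y)\bigr)^2\leq\abs{\gamma}\sum_{e=\{z_{i-1},z_i\}\in\gamma}\bigl(f(z_{i-1})-f(z_i)\bigr)^2 .
\end{equation*}
Since the path is simple, each transition $e\in\+T$ appears at most once in $\gamma$. Substituting and swapping the order of summation, we obtain
\begin{equation*}
  \Var[\pi]{f}\leq\sum_{e=\{u,v\}\in\+T}\bigl(f(u)-f(v)\bigr)^2\sum_{\{x,y\}}\sum_{\gamma\ni e}\Phi_{x,y}(\gamma)\abs{\gamma}.
\end{equation*}
By the definition of congestion, the inner double sum is at most $\rho(\Phi)Q(e)$. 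Hence the whole expression is bounded by $\rho(\Phi)\sum_{e}Q(e)\bigl(f(u)-f(v)\bigr)^2$, which equals $\rho(\Phi)\,\+E_P(f,f)$ by \cref{eq:flow-dirichlet}. Taking the supremum over $f$ gives $\trel\leq\rho(\Phi)$.

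There is no real obstacle here. The only points needing care are the bookkeeping conventions. The variance is written over unordered pairs, which matches the unordered demand convention. The Dirichlet form is written over undirected transitions, matching \cref{eq:flow-dirichlet}, so no stray factor of $2$ appears. Finally, the simple-path restriction ensures that each edge contributes only once per path.
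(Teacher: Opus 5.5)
Your proof is correct and follows essentially the paper's route. The paper cites \cite{Sinclair92} here and notes that the lemma is the singleton-partition case of \cref{lem:partition-coupled-flow}, whose proof is exactly your argument: pairwise variance, routing the demand along paths, Cauchy--Schwarz along each path, then the congestion bound against \cref{eq:flow-dirichlet}. In that specialization the Jensen step over the coupling becomes an equality, since each block is a point mass.
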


In previous applications of multicommodity flows to the permanent~\cite{JSV04,BSVV08}, 
it is important to designate a target set $A$; 
in the permanent setting, $A$ is the set of all perfect matchings.  
Instead of defining a flow between every pair $x,y\in\Omega$, we consider only pairs with $x\in\Omega$ and $y\in A$, at the cost of an extra factor of $\pi(A)^{-1}$ in the congestion bound.  
This restriction is crucial in~\cite{JSV04,BSVV08} for designing flows between pairs of near-perfect matchings to stay within $\Omega$.

\begin{corollary}[Restricted terminal set]
  \label{cor:restricted-terminal-flow}
  Let $A\subseteq\Omega$ be nonempty.  Suppose that, for every $x\in\Omega$ and $y\in A$ with $x\neq y$, a flow $\Phi_{x,y}$ routes demand $\pi(x)\pi(y)$ from $x$ to $y$.  Define
  \begin{equation*}
    \rho_A(\Phi)
    :=
    \max_{e\in\+T}
    \frac1{Q(e)}
    \sum_{x\in\Omega}
    \sum_{\substack{y\in A\\y\neq x}}
    \sum_{\substack{\gamma\in\operatorname{Paths}(x,y)\\e\in\gamma}}
    \Phi_{x,y}(\gamma)\abs{\gamma}.
  \end{equation*}
  Then
  \begin{equation*}
    \trel
    \leq
    \frac{\rho_A(\Phi)}{\pi(A)}.
  \end{equation*}
\end{corollary}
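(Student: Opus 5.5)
We will derive this bound from \cref{lem:sinclair-flow}. The idea is to build a flow between \emph{all} pairs by routing each pair through an intermediate point of $A$. Equivalently, we can work directly at the level of the variance and the Dirichlet form, which avoids any question of simple versus non-simple paths. The starting point is the identity
\begin{equation*}
  \Var[\pi]{f}
  =\min_{c\in\R}\E[\pi]{(f-c)^2}
  \leq \sum_{x\in\Omega}\pi(x)\bigl(f(x)-c\bigr)^2
\end{equation*}
for any constant $c$. We choose $c$ to be the random value $f(Y)$ with $Y\sim\pi(\cdot\mid A)$, and then average over $Y$. This gives
\begin{equation*}
  \Var[\pi]{f}
  \leq\sum_{x\in\Omega}\sum_{y\in A}\pi(x)\frac{\pi(y)}{\pi(A)}\bigl(f(x)-f(y)\bigr)^2
  =\frac1{\pi(A)}\sum_{x\in\Omega}\sum_{\substack{y\in A\\y\neq x}}\pi(x)\pi(y)\bigl(f(x)-f(y)\bigr)^2 .
\end{equation*}
This is the only place where the factor $\pi(A)^{-1}$ enters, and it replaces the usual pairwise formula $\Var[\pi]{f}=\tfrac12\sum_{x,y}\pi(x)\pi(y)(f(x)-f(y))^2$.

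Next comes the standard Sinclair step applied to each term. For a pair $(x,y)$ with $y\in A$, the demand satisfies $\pi(x)\pi(y)=\sum_{\gamma}\Phi_{x,y}(\gamma)$. For each path $\gamma$ we telescope along its edges and apply Cauchy--Schwarz:
\begin{equation*}
  \bigl(f(x)-f(y)\bigr)^2\leq\abs{\gamma}\sum_{e=\{a,b\}\in\gamma}\bigl(f(a)-f(b)\bigr)^2 .
\end{equation*}
Summing with the weights $\Phi_{x,y}(\gamma)$ and exchanging the order of summation, the right-hand side becomes
\begin{equation*}
  \frac1{\pi(A)}\sum_{e\in\+T}\bigl(f(a)-f(b)\bigr)^2\sum_{x}\sum_{y\in A,\,y\neq x}\sum_{\gamma\ni e}\Phi_{x,y}(\gamma)\abs{\gamma}
  \leq\frac{\rho_A(\Phi)}{\pi(A)}\sum_{e\in\+T}Q(e)\bigl(f(a)-f(b)\bigr)^2 .
\end{equation*}
By \cref{eq:flow-dirichlet}, the last sum equals $\+E_P(f,f)$. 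This yields the Poincar\'e inequality with constant $\rho_A(\Phi)/\pi(A)$, and hence $\trel\leq\rho_A(\Phi)/\pi(A)$.

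The main point needing care is bookkeeping of ordered versus unordered pairs. The sum over $x\in\Omega$, $y\in A$ counts ordered pairs, and a pair with both endpoints in $A$ may be counted twice. This is harmless, because the variance bound above already uses exactly this ordered sum with coefficient $1$, and $\rho_A$ is defined with the same ordered sum. Consequently there is no factor of $2$ to reconcile, and we never need the symmetrized form used in \cref{lem:sinclair-flow}. We must also make sure that the diagonal terms $x=y$ vanish, which they do.
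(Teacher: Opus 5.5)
Your proof is correct and follows the paper's argument. You bound $\mathrm{Var}_\pi(f)$ by the ordered sum $\pi(A)^{-1}\sum_{x\in\Omega}\sum_{y\in A}\pi(x)\pi(y)\bigl(f(x)-f(y)\bigr)^2$, which is the same inequality as the paper's choice of $c$ as the conditional mean of $f$ on $A$ followed by Jensen, and then you route each term along $\Phi_{x,y}$ with pathwise Cauchy--Schwarz. Your opening claim that you will derive the bound from \cref{lem:sinclair-flow} does not match what you do, since the direct argument you give never uses that lemma.
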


\begin{proof}
  By the variational characterization of variance and Jensen's inequality,
  \begin{align*}
    \Var[\pi]{f}
    &\leq
    \sum_{x\in\Omega}\pi(x)
    \left(f(x)-\E[\pi(\,\cdot\mid A)]{f}\right)^2 \\
    &\leq
    \frac1{\pi(A)}
    \sum_{x\in\Omega}\sum_{y\in A}
    \pi(x)\pi(y)\bigl(f(x)-f(y)\bigr)^2.
  \end{align*}
  Route each term along the prescribed flow.  Applying Cauchy--Schwarz along each path and then the definition of $\rho_A(\Phi)$ gives
  \begin{equation*}
    \Var[\pi]{f}
    \leq
    \frac{\rho_A(\Phi)}{\pi(A)}\+E_P(f,f).
  \end{equation*}
  Taking the supremum over $f$ proves the result.
\end{proof}

\subsection{Coupled Flows for a Partition-Restricted Poincar\'e Inequality}
\label{subsec:partition-coupled-flows}

An ordinary multicommodity flow prescribes the demand between each pair of states. 
By contrast, the restricted variance determines only the demand between blocks, leaving open how this demand is coupled among individual endpoints. 
This flexibility leads to the following \emph{coupled-flow} formulation, closely related to the transport-flow viewpoint of~\cite{CFJ25}. 
However, once the coupling fixes the endpoint demand matrix, our argument follows directly from Sinclair's path-congestion argument
without requiring an additional local-to-global framework as in \cite{CFJ25}.

Recall that $\+S=\{S_1,\ldots,S_k\}$ is a partition of $\Omega$, with $\pi_i=\pi(\,\cdot\mid S_i)$ for every $1\leq i\leq k$.  
For each pair of distinct blocks $S_i$ and $S_j$, the restricted variance assigns a total demand of $\pi(S_i)\pi(S_j)$ between the two blocks.  
Unlike the ordinary multicommodity flow, which fixes the demand between individual endpoints $x$ and $y$ to be $\pi(x)\pi(y)$, the restricted variance does not prescribe how this block-level demand is distributed among pairs $(x,y)\in S_i\times S_j$.  
It requires only that the marginal distribution of each endpoint be the corresponding conditional stationary distribution.  
We exploit this freedom by choosing, for every $1\leq i<j\leq k$, a coupling $\kappa_{ij}$ of $\pi_i$ and $\pi_j$; that is, a probability distribution on $S_i\times S_j$ satisfying
\begin{equation*}
  \sum_{y\in S_j}\kappa_{ij}(x,y)=\pi_i(x),
  \qquad
  \sum_{x\in S_i}\kappa_{ij}(x,y)=\pi_j(y).
\end{equation*}

Accordingly, each pair $x\in S_i$ and $y\in S_j$ receives demand $\pi(S_i)\pi(S_j)\kappa_{ij}(x,y)$.  We may split this demand fractionally among paths from $x$ to $y$.  Thus the nonnegative path weights $\Psi_{ij}^{x,y}(\gamma)$ satisfy
\begin{equation}
  \sum_{\gamma\in\operatorname{Paths}(x,y)}
  \Psi_{ij}^{x,y}(\gamma)
  =
  \pi(S_i)\pi(S_j)\kappa_{ij}(x,y).
  \label{eq:partition-flow-demand}
\end{equation}
The length-weighted congestion of the resulting coupled flow $\Psi$ is
\begin{equation*}
  \rho_{\+S}(\Psi)
  :=
  \max_{e\in\+T}
  \frac1{Q(e)}
  \sum_{1\leq i<j\leq k}
  \sum_{x\in S_i}
  \sum_{y\in S_j}
  \sum_{\substack{
    \gamma\in\operatorname{Paths}(x,y)\\
    e\in\gamma
  }}
  \Psi_{ij}^{x,y}(\gamma)\abs{\gamma}.
\end{equation*}

\begin{lemma}[Coupled-flow bound]
  \label{lem:partition-coupled-flow}
  For every coupled flow satisfying \cref{eq:partition-flow-demand},
  \begin{equation*}
    \tau_{\+S}\leq\rho_{\+S}(\Psi).
  \end{equation*}
\end{lemma}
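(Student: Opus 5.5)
We follow Sinclair's path-congestion argument; the only change is that the endpoint demands come from the couplings $\kappa_{ij}$ rather than the product measure. Fix $f:\Omega\to\R$ and write $\mu_i:=\E[\pi_i]{f}$. Starting from the pairwise formula \cref{eq:restricted-variance-pairwise}, the diagonal terms vanish, so
\begin{equation*}
  \Var[\+S]{f}=\sum_{1\leq i<j\leq k}\pi(S_i)\pi(S_j)\left(\mu_i-\mu_j\right)^2 .
\end{equation*}
Since $\kappa_{ij}$ is a coupling of $\pi_i$ and $\pi_j$, we have $\mu_i-\mu_j=\E[(X,Y)\sim\kappa_{ij}]{f(X)-f(Y)}$. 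Jensen's inequality (convexity of the square) then gives
\begin{equation*}
  (\mu_i-\mu_j)^2\leq\sum_{x\in S_i}\sum_{y\in S_j}\kappa_{ij}(x,y)\bigl(f(x)-f(y)\bigr)^2 .
\end{equation*}
This is the step where the freedom of choosing the coupling is used: the restricted variance is dominated by a sum of pairwise squared differences with demand $\pi(S_i)\pi(S_j)\kappa_{ij}(x,y)$. This demand is exactly the total path weight in \cref{eq:partition-flow-demand}.

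Next, for each $x\in S_i$ and $y\in S_j$ we split the term $\pi(S_i)\pi(S_j)\kappa_{ij}(x,y)(f(x)-f(y))^2$ among the paths $\gamma\in\operatorname{Paths}(x,y)$ according to the weights $\Psi_{ij}^{x,y}(\gamma)$. For each path we telescope $f(x)-f(y)$ along its transitions and apply Cauchy--Schwarz:
\begin{equation*}
  \bigl(f(x)-f(y)\bigr)^2\leq\abs{\gamma}\sum_{e=\{u,w\}\in\gamma}\bigl(f(u)-f(w)\bigr)^2 .
\end{equation*}
Combining the two steps yields
\begin{equation*}
  \Var[\+S]{f}\leq\sum_{i<j}\sum_{x,y}\sum_{\gamma}\Psi_{ij}^{x,y}(\gamma)\abs{\gamma}\sum_{e\in\gamma}\bigl(f(u)-f(w)\bigr)^2 .
\end{equation*}

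Finally, we exchange the order of summation so that the outer sum is over $e\in\+T$. The coefficient of $(f(u)-f(w))^2$ is then the total length-weighted load through $e$, which by definition is at most $\rho_{\+S}(\Psi)\,Q(e)$. By \cref{eq:flow-dirichlet}, this gives $\Var[\+S]{f}\leq\rho_{\+S}(\Psi)\,\+E_P(f,f)$, and taking the supremum over $f$ yields $\tau_{\+S}\leq\rho_{\+S}(\Psi)$.

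None of these steps is a serious obstacle; the one point that needs care is the Jensen step. It must use the coupling rather than the product measure, and the resulting weights must match \cref{eq:partition-flow-demand} exactly, including the convention that only unordered block pairs $i<j$ are counted. Because the pairwise formula carries a factor $\tfrac12$ over ordered pairs, it becomes a plain sum over $i<j$, consistent with the definition of $\rho_{\+S}$.
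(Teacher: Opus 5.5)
Your proof is correct and follows essentially the same route as the paper: Jensen's inequality under each coupling $\kappa_{ij}$ bounds the squared difference of conditional means, and then the standard pathwise Cauchy--Schwarz and exchange of summation compare against the Dirichlet form \cref{eq:flow-dirichlet}. Your observation that the factor $\tfrac12$ in \cref{eq:restricted-variance-pairwise} turns into a plain sum over $i<j$ is exactly the bookkeeping the paper relies on implicitly.
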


\begin{proof}
  For every $1\leq i<j\leq k$, the marginal conditions on $\kappa_{ij}$ imply
  \begin{equation*}
    \E[\pi_i]{f}-\E[\pi_j]{f}
    =
    \sum_{x\in S_i}\sum_{y\in S_j}
    \kappa_{ij}(x,y)\bigl(f(x)-f(y)\bigr).
  \end{equation*}
  Jensen's inequality therefore bounds the squared difference of the conditional means by
  \begin{equation*}
    \left(\E[\pi_i]{f}-\E[\pi_j]{f}\right)^2
    \leq
    \sum_{x\in S_i}\sum_{y\in S_j}
    \kappa_{ij}(x,y)\bigl(f(x)-f(y)\bigr)^2.
  \end{equation*}
  Multiplying by $\pi(S_i)\pi(S_j)$, summing over $i<j$, and using \cref{eq:restricted-variance-pairwise} and \cref{eq:partition-flow-demand}, we obtain
  \begin{align*}
    \Var[\+S]{f}
    &\leq
    \sum_{1\leq i<j\leq k}
    \sum_{x\in S_i}
    \sum_{y\in S_j}
    \sum_{\gamma\in\operatorname{Paths}(x,y)}
    \Psi_{ij}^{x,y}(\gamma)\bigl(f(x)-f(y)\bigr)^2\\
    &\leq
    \sum_{e=\{a,b\}\in\+T}
    \bigl(f(a)-f(b)\bigr)^2
    \sum_{1\leq i<j\leq k}
    \sum_{x\in S_i}
    \sum_{y\in S_j}
    \sum_{\substack{\gamma\in\operatorname{Paths}(x,y)\\e\in\gamma}}
    \Psi_{ij}^{x,y}(\gamma)\abs{\gamma}\\
    &\leq
    \rho_{\+S}(\Psi)
    \sum_{e=\{a,b\}\in\+T}
    Q(e)\bigl(f(a)-f(b)\bigr)^2\\
    &=
    \rho_{\+S}(\Psi)\+E_P(f,f),
  \end{align*}
  where the second inequality applies Cauchy--Schwarz along each path and the last equality uses \cref{eq:flow-dirichlet}.  Taking the supremum over $f$ proves the result.
\end{proof}

The essential difference from Sinclair's ordinary flow lies in the demand matrix.  The standard construction routes the product demand $\pi(x)\pi(y)$, corresponding to independent endpoints.  Here, the two endpoint marginals are fixed to be $\pi_i$ and $\pi_j$, while their joint distribution may be chosen as any coupling $\kappa_{ij}$.  Once this coupled demand is fixed, the remainder of the proof is the standard path-congestion argument.

Sinclair's multicommodity-flow bound, stated in \cref{lem:sinclair-flow}, is recovered from \cref{lem:partition-coupled-flow} by taking $\+S=\set{\set{x}:x\in\Omega}$ to be the singleton partition.  Each conditional distribution is then a point mass, so the coupling between any two blocks is uniquely determined.  Accordingly, \cref{eq:partition-flow-demand} becomes the product demand $\pi(x)\pi(y)$.

The restricted-terminal construction in \cref{cor:restricted-terminal-flow} has a direct block-level analogue: given a partition $\+S=\set{S_1,\ldots,S_k}$, we may designate one block $S_r$ as the target and route the demand from every other block to $S_r$.
In the application to permanent approximation, the perfect-matching block serves as the target, while each remaining block consists of the near-perfect matchings with a fixed hole pattern.

\begin{corollary}[Target-block coupled flows]
  \label{cor:partition-target-flows}
  Fix $1\leq r\leq k$.  For each $i\neq r$, choose a coupling $\kappa_i$ of $\pi_i$ and $\pi_r$.  For every $x\in S_i$ and $y\in S_r$, let the nonnegative path weights $\Psi_i^{x,y}(\gamma)$ satisfy $\sum_{\gamma\in\operatorname{Paths}(x,y)}\Psi_i^{x,y}(\gamma)=\pi(S_i)\kappa_i(x,y)$.
  Define
  \begin{equation*}
    \rho_{\+S,r}(\Psi)
    :=
    \max_{e\in\+T}
    \frac1{Q(e)}
    \sum_{\substack{1\leq i\leq k\\i\neq r}}
    \sum_{x\in S_i}
    \sum_{y\in S_r}
    \sum_{\substack{\gamma\in\operatorname{Paths}(x,y)\\e\in\gamma}}
    \Psi_i^{x,y}(\gamma)\abs{\gamma}.
  \end{equation*}
  Then
  \begin{equation*}
    \tau_{\+S}\leq\rho_{\+S,r}(\Psi).
  \end{equation*}
\end{corollary}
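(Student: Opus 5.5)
The plan is to mirror the proof of \cref{cor:restricted-terminal-flow}, but at the level of block means, and then repeat the path-congestion argument of \cref{lem:partition-coupled-flow} with the target-block demands. Write $m_i:=\E[\pi_i]{f}$ for each block. The first step is a variational bound on the restricted variance. Since $\Var[\+S]{f}=\Var[\pi]{\mathsf A_{\+S}f}$ is the variance of the block-constant function taking value $m_i$ on $S_i$, and variance minimizes the mean squared deviation from a constant, we can take the constant to be $m_r$:
\begin{equation*}
  \Var[\+S]{f}
  \leq
  \sum_{i=1}^k\pi(S_i)\bigl(m_i-m_r\bigr)^2
  =
  \sum_{i\neq r}\pi(S_i)\bigl(m_i-m_r\bigr)^2 .
\end{equation*}
This step is what lets the demand carry the factor $\pi(S_i)$ alone, with no $\pi(S_r)$ and no $1/\pi(S_r)$ penalty. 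The penalty is absent because the target is the exact block mean $m_r$, not a random point of $S_r$.

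Next, for each $i\neq r$, we use the marginal conditions of $\kappa_i$ to write $m_i-m_r=\sum_{x\in S_i}\sum_{y\in S_r}\kappa_i(x,y)\bigl(f(x)-f(y)\bigr)$. Jensen's inequality, as in \cref{lem:partition-coupled-flow}, then gives $(m_i-m_r)^2\leq\sum_{x,y}\kappa_i(x,y)(f(x)-f(y))^2$. Multiplying by $\pi(S_i)$ and using the demand condition $\sum_\gamma\Psi_i^{x,y}(\gamma)=\pi(S_i)\kappa_i(x,y)$ yields
\begin{equation*}
  \Var[\+S]{f}\leq\sum_{i\neq r}\sum_{x\in S_i}\sum_{y\in S_r}\sum_{\gamma\in\operatorname{Paths}(x,y)}\Psi_i^{x,y}(\gamma)\bigl(f(x)-f(y)\bigr)^2 .
\end{equation*}

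Finally, we apply Cauchy--Schwarz along each path, $(f(x)-f(y))^2\leq\abs{\gamma}\sum_{\{a,b\}\in\gamma}(f(a)-f(b))^2$, and exchange the order of summation so that the sum runs over transitions $e\in\+T$. By the definition of $\rho_{\+S,r}(\Psi)$, the total weight on each $e$ is at most $\rho_{\+S,r}(\Psi)\,Q(e)$. By \cref{eq:flow-dirichlet} this gives $\Var[\+S]{f}\leq\rho_{\+S,r}(\Psi)\,\+E_P(f,f)$, and taking the supremum over $f$ proves the claim.

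The only step needing care is the first, variational one. We must check that the restricted variance really is the variance of $\mathsf A_{\+S}f$ under $\pi$, which holds by its definition in \cref{eq:restricted-variance-pairwise}. We must also check that the mean of $\mathsf A_{\+S}f$ is the minimizer of the squared deviation. The remaining steps are routine repetitions of \cref{lem:partition-coupled-flow}.
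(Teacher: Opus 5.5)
Your proposal is correct and matches the paper's proof step for step. The paper also evaluates the variational formula for $\Var[\+S]{f}$ at the target-block mean $\E[\pi_r]{f}$, applies Jensen's inequality under each coupling $\kappa_i$, and then reuses the pathwise Cauchy--Schwarz congestion calculation from \cref{lem:partition-coupled-flow}.
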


\begin{proof}
  For every $f:\Omega\to\R$, we have
  \begin{align*}
    \Var[\+S]{f}
    &\leq
    \sum_{\substack{1\leq i\leq k\\i\neq r}}
    \pi(S_i)\left(\E[\pi_i]{f}-\E[\pi_r]{f}\right)^2 \\
    &\leq
    \sum_{\substack{1\leq i\leq k\\i\neq r}}
    \pi(S_i)
    \sum_{x\in S_i}\sum_{y\in S_r}
    \kappa_i(x,y)\bigl(f(x)-f(y)\bigr)^2 \\
    &\leq
    \rho_{\+S,r}(\Psi)\+E_P(f,f).
  \end{align*}
  The first inequality evaluates the variational formula for variance at $\E[\pi_r]{f}$, the second applies Jensen's inequality under each coupling $\kappa_i$, and the third is the same pathwise Cauchy--Schwarz and congestion calculation used in the proof of \cref{lem:partition-coupled-flow}.  Taking the supremum over $f$ completes the proof.
\end{proof}

\subsection{Variance of stationary empirical averages}
\label{sec:proof-restricted-resampling}
In this section, we prove \cref{thm:restricted-resampling}.
  For functions $g, h:\Omega \to \mathbb{R}$, define the inner product $\inner{f}{g}$ as:
  \begin{align*}
      \inner{f}{g} := \E[x\sim \pi]{f(x)g(x)}.
  \end{align*}
 We also use $L^2(\pi)$ to denote the inner product space obtained by $\mathbb{R}^\Omega$ equipped with the inner product $\inner{\cdot}{\cdot}$.
  
  It suffices to prove \cref{item:restricted-resampling-restricted}, since \cref{item:restricted-resampling-standard} follows by taking $\+S$ to be the partition of $\Omega$ into singleton blocks, for which $\mathsf A_{\+S}$ is the identity operator, $\tau_{\+S}=\trel$, and $\+F_{\+S}$ is the space of all functions on $\Omega$.
  Fix $g\in\+F_{\+S}$.  
  Subtracting $\E[\pi]{g}$ changes neither side of the desired inequality, so we may assume that $\E[\pi]{g}=0$.  

  Since $P$ is reversible, it admits an orthonormal eigenbasis $\phi_1=\one,\phi_2,\ldots,\phi_{\abs{\Omega}}$ in $L^2(\pi)$.  
  Note that $\inner{\phi_1}{\phi_i}=\E[\pi]{\phi_i}=0$ for every $i\geq2$.
  Let $\lambda_1=1,\lambda_2,\ldots,\lambda_{\abs{\Omega}}$ be the corresponding eigenvalues.  Nonnegativity of the spectrum and irreducibility imply that $0\leq\lambda_i<1$ for every $i\geq2$.

  Since $g$ is centered, write $g=\sum_{i=2}^{\abs{\Omega}}a_i\phi_i$.  For every $s\geq0$, orthonormality gives
  \begin{equation*}
    \inner{g}{P^s g}
    =\inner{\sum_{i=2}^{\abs{\Omega}}a_i\phi_i}{\sum_{i=2}^{\abs{\Omega}}a_i\lambda_i^s\phi_i}
    =\sum_{i=2}^{\abs{\Omega}}a_i^2\lambda_i^s
    \geq0.
  \end{equation*}

  The geometric-series identity gives
  \begin{equation*}
    \sum_{s=0}^{\infty}P^s g
    =\sum_{i=2}^{\abs{\Omega}}\frac{a_i}{1-\lambda_i}\phi_i.
  \end{equation*}
  Let $u\coloneqq\sum_{s=0}^{\infty}P^s g$.  The preceding representation shows that $u$ is centered, and multiplying it by $I-P$ gives $(I-P)u=g$.  Because the chain is irreducible, this is the unique centered solution of that equation.

  Since $g\in\+F_{\+S}$ is constant within every block, averaging $u$ within the blocks does not change its inner product with $g$.
  Thus $\inner{g}{u}=\inner{g}{\mathsf A_{\+S}u}$.
  Therefore,
  \begin{align*}
    \inner{g}{u}^2
    =\inner{g}{\mathsf A_{\+S}u}^2
    &\leq\E[\pi]{g^2}\E[\pi]{(\mathsf A_{\+S}u)^2}
    \tag{Cauchy--Schwarz inequality}\\
    &=\Var[\pi]{g}\Var[\+S]{u}
    \tag{$\E[\pi]{g}=\E[\pi]{\mathsf A_{\+S}u}=0$}\\
    &\leq\tau_{\+S}\Var[\pi]{g}\,\+E_P(u,u)
    \tag{restricted Poincar\'e inequality}\\
    &=\tau_{\+S}\Var[\pi]{g}\inner{g}{u}.
    \tag{$(I-P)u=g$}
  \end{align*}
  If $g=0$, then the result is immediate.  If $g\neq0$, then $\inner{g}{u}=\+E_P(u,u)>0$.  Cancelling $\inner{g}{u}$ in the preceding inequality yields
  \begin{equation*}
    \sum_{s=0}^{\infty}\inner{g}{P^s g}
    =\inner{g}{u}
    \leq
    \tau_{\+S}\Var[\pi]{g}.
  \end{equation*}

  It remains to translate this bound into a variance estimate for the empirical average.  Stationarity, centering, and the Markov property give, for every $s\geq0$,
  \begin{equation*}
    \Cov\tuple{g(X_0),g(X_s)}
    =\E{g(X_0)\E{g(X_s)\mid X_0}}
    =\E[\pi]{g(X_0)(P^s g)(X_0)}
    =\inner{g}{P^s g}.
  \end{equation*}
  By stationarity, the covariance between two observations depends only on their time lag.  Grouping pairs by this lag and using the nonnegativity established above, we obtain
  \begin{align*}
    \Var{\frac1m\sum_{t=0}^{m-1}g(X_t)}
    &=\frac1{m^2}\sum_{t,r=0}^{m-1}\Cov\tuple{g(X_t),g(X_r)}\\
    &=\frac1{m^2}\tuple{m\inner{g}{g}+2\sum_{s=1}^{m-1}(m-s)\inner{g}{P^s g}}\\
    &\leq\frac1m\tuple{\inner{g}{g}+2\sum_{s=1}^{\infty}\inner{g}{P^s g}}\\
    &\leq\frac2m\sum_{s=0}^{\infty}\inner{g}{P^s g}
    \leq\frac{2\tau_{\+S}}m\Var[\pi]{g}.
  \end{align*}

\section{Faster FPRAS via Hole-Restricted Relaxation Time}
\label{sec:improved-random-weight-refinement}
With the abstract framework in place, 
we now apply \cref{sec:restricted-poincare} to the weight-refinement procedure of
\cite{BSVV08}.  We use the notation introduced in \cref{sec:overview}, with
the convention that $\+P$ and $\+N(u,v)$ are taken in the complete
bipartite graph on $V_1\cup V_2$, all of whose pairs carry positive
activities.  Thus
$\Omega=\+P\cup\bigcup_{(u,v)\in V_1\times V_2}\+N(u,v)$.

Given positive hole weights $w$, let $\pi_{\lambda,w}$ be the distribution on
$\Omega$ assigning unnormalized weight $\lambda(M)$ to $M\in\+P$ and
$\lambda(M)w(u,v)$ to $M\in\+N(u,v)$, and let $Z$ be its normalizing constant.
Recall that the ideal hole weights are
\idealweights* 

\noindent Under $w^*$, the perfect-matching block and each of the $n^2$ near-perfect blocks have equal probability in the stationary distribution $\pi_{\lambda,w^*}$, see \cref{eq:ideal-weights-uniform}.

\subsection{The Metropolis Chain on Perfect and Near-Perfect Matchings}
\label{sec:metropolis-chain}

Fix positive activities $\lambda$ and positive hole weights $w$, and write $\pi:=\pi_{\lambda,w}$.  We use the following lazy Metropolis chain $\MC(\lambda,w)$ on $\Omega$.  Two distinct matchings $M,M'\in\Omega$ are adjacent if one can be obtained from the other by one of the following moves:
\begin{itemize}
  \item \emph{Add/delete:} Delete an edge $(u,v)$ from a perfect matching $M\in\+P$, producing $M\setminus\set{(u,v)}\in\+N(u,v)$, or perform the reverse move by adding the edge joining the two holes of a near-perfect matching.
  \item \emph{Slide fixing the left hole:} If $M\in\+N(u,v)$ and $(x,y)\in M$, replace $(x,y)$ by $(x,v)$.  The resulting matching lies in $\+N(u,y)$.
  \item \emph{Slide fixing the right hole:} If $M\in\+N(u,v)$ and $(x,y)\in M$, replace $(x,y)$ by $(u,y)$.  The resulting matching lies in $\+N(x,v)$.
\end{itemize}
Write $M\sim M'$ when $M$ and $M'$ are adjacent.  The transition matrix $P$ of $\MC(\lambda,w)$ is defined by
\begin{equation}
  P(M,M'):=
  \begin{cases}
    \displaystyle\frac1{4n}\min\set{1,\frac{\pi(M')}{\pi(M)}}, & M\sim M',\\[2mm]
    0, & M\neq M'\text{ and }M\not\sim M',\\[1mm]
    \displaystyle 1-\sum_{M'\neq M}P(M,M'), & M'=M.
  \end{cases}
  \label{eq:metropolis-transition}
\end{equation}
Note that a neighboring proposal and the acceptance ratio
$\pi(M')/\pi(M)$ can both be computed from the activities and hole weights in $O(1)$ time, so each step of the chain can be implemented in $O(1)$ time.
A perfect matching has $n$ neighbors, whereas a near-perfect matching has at most $2n-1$.  Since every off-diagonal transition probability is at most $1/(4n)$, we have $P(M,M)\geq1/2$ for every $M\in\Omega$.  Thus the chain is lazy and, in particular, aperiodic.

For every adjacent pair $M\sim M'$, the Metropolis rule gives
\begin{equation}
  \pi(M)P(M,M')
  =\frac1{4n}\min\set{\pi(M),\pi(M')}
  =\pi(M')P(M',M).
  \label{eq:metropolis-stationary-flow}
\end{equation}
Thus detailed balance holds, so the chain is reversible and $\pi$ is stationary.  Moreover, the underlying transition graph on perfect and near-perfect matchings is connected, and positivity of the activities and hole weights makes every permitted transition have positive probability.  Hence $\MC(\lambda,w)$ is irreducible and $\pi$ is its unique stationary distribution; see also~\cite{BSVV08}.

\subsection{Hole-Restricted Relaxation Time \texorpdfstring{$\tholes$}{tau\_holes}}

The weight-refinement algorithm records only the block containing the current state:
the perfect-matching block $\+P$, or the near-perfect block $\+N(u,v)$ determined by its two holes.  This induces the hole-pattern partition
\begin{equation*}
  \+S_{\mathrm{holes}}
  :=
  \set{\+P}
  \cup
  \set{\+N(u,v):(u,v)\in V_1\times V_2}.
\end{equation*}
For the chain $\MC(\lambda,w)$, let $\tholes=\tau_{\+S_{\mathrm{holes}}}$ denote the restricted relaxation time of the partition $\+S_{\mathrm{holes}}$ from \cref{def:restricted-relax}.
The following lemma bounds $\tholes$; this yields the improved sampling time. 

\begin{lemma}
    \label{lem:restricted-poincare}
    Fix any collection of positive activities $\lambda$, and let $w^*$ be the corresponding ideal hole weights.  The hole-restricted relaxation time of $\MC(\lambda,w^*)$ satisfies
    \begin{equation}
      \tholes\leq16n^3.
      \label{eq:ideal-restricted-poincare}
    \end{equation}
Moreover, if $w$ is any collection of rough hole weights with respect to $\lambda$, then the hole-restricted relaxation time of $\MC(\lambda,w)$ satisfies
    \begin{equation}
      \tholes\leq256n^3.
      \label{eq:rough-restricted-poincare}
    \end{equation}
\end{lemma}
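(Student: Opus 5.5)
We apply \cref{cor:partition-target-flows} with the perfect-matching block $\+P$ as the target block $S_r$, and we build the coupled flow so that each path follows only an augmenting path and never an alternating cycle. Assume first that $w=w^*$. Then every block has mass $1/(n^2+1)$ by \cref{eq:ideal-weights-uniform}. For a near-perfect block $\+N(u,v)$, note that $\lambda(\+N(u,v))=\sum_{M'\in\+P,\,(u,v)\in M'}\lambda(M')/\lambda(u,v)$ fails in general, because we work on the complete bipartite graph. So we will not couple by the deterministic map $M\mapsto M\cup\{(u,v)\}$ alone.

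Instead we use the coupling that is natural for the JSV analysis. Draw $x\sim\pi(\cdot\mid\+N(u,v))$ and $y\sim\pi(\cdot\mid\+P)$ and condition on the structure of $x\oplus y$. In $x\oplus y$, the holes $u,v$ are joined by an odd augmenting path together with even cycles. Define $\kappa(x,y)$ as follows: pick $y$ from $\pi(\cdot\mid\+P)$, pick the partner $x$ by resampling only along one alternating path from $u$ to $v$ in $y$, and keep $x$ equal to $y$ on everything else. Concretely, $x$ and $y$ differ exactly on a single augmenting path $\gamma$ from $u$ to $v$.

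The first step is to check that this coupling has the right marginals. The measure on pairs $(x,y)$ with $x\oplus y$ equal to a single path has weight $\lambda(x)\lambda(y)$ times the correct normalization. Summing over $y$ for fixed $x$ must give $\pi_i(x)$, and summing over $x$ for fixed $y$ must give $\pi_r(y)$. If an exact coupling supported on single-path differences is awkward, we fall back to a fractional version. Start from the independent pair $(x,y)$, and replace the pair by $(x',y)$ where $x'=x\oplus(\text{cycles of }x\oplus y)$, so that $x$ agrees with $y$ on the cycles. Flipping even alternating cycles preserves the hole pattern and preserves $\lambda(x)\lambda(y)$ as a multiset product, since the pair $(x,y)\mapsto(x',y')$ with $y'=y\oplus\text{cycles}$ is a weight-preserving involution on $\+N(u,v)\times\+P$. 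Pushing the independent product measure forward to $(x',y')$ leaves both marginals invariant, so the resulting $\kappa_{(u,v)}$ is a valid coupling whose pairs differ only by one augmenting path. This is the step we expect to require the most care: we must verify the marginal preservation precisely, since both $x'$ and $y'$ change.

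Next we route each coupled pair along the augmenting path in the standard JSV way. The path proceeds by slides that shift the hole along $\gamma$, ending with one add move. The path length is $|\gamma|\le n$. To bound congestion at a transition $e=(M,M')$, we use the usual encoding $\eta=x\oplus y\oplus M$ together with the injectivity argument. Because there are no cycles, $(x,y)$ is determined by $M$, the path, and the encoding, and the weight identity gives $\pi(x)\pi(y)\lesssim\pi(M)\pi(\eta)$ up to hole-weight factors. Summing over $\eta$ bounds the total flow through $e$ by an expression of order $\pi(M)\cdot n$ times the hole-weight ratio factor. The factor saved relative to \cite{BSVV08} comes from the fact that the encoding need not record the cycle decomposition. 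We then divide by $Q(e)=\min\{\pi(M),\pi(M')\}/(4n)$ and multiply by the length $n$. Using the BSVV-style inequality $\sum_{(u,v)} w^*$-weighted ratios, where ideal weights make each block equal, we obtain $\rho\le 16n^3$.

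For rough weights $w$, each ratio $w/w^*\in[1/2,2]$. The stationary probabilities of states, the block masses $\pi(S_i)$, and the capacities $Q(e)$ each change by at most a factor $4$ relative to the ideal chain, in the appropriate directions. Tracking these factors through the congestion bound gives an overall factor of $16$, hence $\tholes\le256n^3$. Alternatively one can compare Dirichlet forms and restricted variances directly. The main obstacle is the congestion count at an edge $e$ in the middle of a path. There we must show that the encoding determines $(x,y)$ and that the weight inequality holds for slide moves whose hole positions differ from those of $x$, which is where the equal-block-mass property of $w^*$ enters.
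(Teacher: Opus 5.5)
Your architecture matches the paper's. You take $\mathcal{P}$ as the target block in the target-block corollary, and your routes unwind only the augmenting path by slides and finish with one addition. Your factor-$16$ transfer to rough weights also works: the within-block conditionals, and hence the couplings, are unchanged, so tracking the factor-$4$ changes of $\pi(S_i)$ and $Q(e)$ through the congestion is as valid as the paper's direct comparison of restricted variances and Dirichlet forms. Your fallback coupling is in fact the paper's. With $C$ the cycles of $x\oplus y$, your involution $(x,y)\mapsto(x\oplus C,y\oplus C)$ is exactly $\Phi(I,F)=(F\oplus\mathcal{A},I\oplus\mathcal{A})$, and $(x\oplus C,y)$ has the same law as the paper's $(I,J)=(I,F\oplus C)$. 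Two statements there are wrong, though:
\begin{itemize}
\item The coupling is $(x',y)$, not the push-forward $(x',y')$. The latter still differs on the cycles; the involution only serves to show $x'\sim\pi_{uv}$.
\item The identity $\lambda(\mathcal{N}(u,v))=\sum_{M'\ni(u,v)}\lambda(M')/\lambda(u,v)$ does hold on the complete bipartite graph. The map $M\mapsto M\cup\{(u,v)\}$ fails as a coupling because its image has law $\pi_\emptyset(\cdot\mid(u,v)\in M)$, not $\pi_\emptyset$.
\end{itemize}

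The genuine gap is the congestion bound. It is the entire content of $16n^3$, and you assert it rather than derive it, attributing it to the wrong mechanism.

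\emph{The source of the factor-$n$ saving.} It is not that ``the encoding need not record the cycle decomposition''; it is that every route keeps its right hole $z$ fixed. Consider a slide $\alpha\mapsto\beta$ with $\alpha\in\mathcal{N}(v,z)$ and $\beta=\alpha\setminus\{(x,a)\}\cup\{(v,a)\}$. Only the $n$ blocks $\mathcal{N}(u,z)$ contribute, and the injective encoding $K=(I\uplus F)\setminus(\alpha\uplus\{(v,a)\})$ lies in $\mathcal{N}(u,a)$. Each pair carries demand $(n^2+1)\pi(I)\pi(F)$. So your load ``of order $\pi(M)\cdot n$'' holds only because the encodings fill $n$ blocks of mass $1/(n^2+1)$ each. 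If the encodings spread over all blocks, as when both holes move, the bound becomes $O(n^4)$.

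\emph{The weight comparison.} Equal block masses do not give $\pi(I)\pi(F)\lesssim\pi(\alpha)\pi(K)$. They only reduce it to bounding $\lambda_{(v,a)}Z_{ua}Z_{vz}/(Z_{uz}Z_\emptyset)$. Because you divide by $Q(e)=\min\{\pi(\alpha),\pi(\beta)\}/(4n)$, you also need $\lambda_{(x,a)}Z_{ua}Z_{xz}/(Z_{uz}Z_\emptyset)$ to be bounded. The missing ingredients are:
\begin{itemize}
\item Both ratios need the BSVV switching inequality $\lambda_{(x,a)}Z_{ua}Z_{xz}\le 2Z_{uz}Z_\emptyset$ for $u\ne x$, $a\ne z$ (their Lemma~6.2(2)).
\item The first slide ($u=v$) needs the separate injection $N\mapsto N\cup\{(v,a)\}$.
\item The final addition needs the encoding $(I\uplus F)\setminus\beta\in\mathcal{N}(u,z)$ together with $\lambda_{(x,z)}Z_{xz}\le Z_\emptyset$, which gives $\pi(\beta)\le\pi(\alpha)$.
\end{itemize}
Without these pieces, neither the order $n^3$ nor the constant $16$ is established.
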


We prove \cref{lem:restricted-poincare} via \cref{cor:partition-target-flows}.  In particular, we set the perfect matchings as the target block $S_r=\mathcal{P}$ and define a coupled flow from $\mathcal{N}(u,v)$ to $\mathcal{P}$.
For this construction, set $w=w^*$ and $\pi=\pi_{\lambda,w^*}$.  For each $(u,v)\in V_1\times V_2$, let
\begin{equation*}
  Z_{uv}:=\lambda(\+N(u,v)),
  \qquad
  Z_\emptyset:=\lambda(\+P).
\end{equation*}
We use the following coupling.
For convenience, for each hole pattern $h$ (either $h=uz$, corresponding to unmatched vertices $u$ and $z$, or $h=\emptyset$, corresponding to a perfect matching), we let $\pi_h$ denote the distribution $\pi$ conditioned on the sampled matching having hole pattern $h$.
For two sets of edges $A$ and $B$, write $A\oplus B$ for their
symmetric difference.
\begin{lemma} \label{couple-I-J}
  Fix $u \in V_1, z \in V_2$.
  Generate the random matching pair $(I,J)$ as follows:
  \begin{enumerate}
  \item sample $I \sim \pi_{uz}$ and $F \sim \pi_\emptyset$ independently;
  \item the symmetric difference $I\oplus F$ is a disjoint union of even alternating cycles and one alternating path $\mathcal{A}$ from $u$ to $z$; put $J := I \oplus \mathcal{A}$.
  \end{enumerate}
  Then $J$ has distribution $\pi_\emptyset$.
\end{lemma}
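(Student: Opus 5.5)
The plan is to realize the coupling as an activity-preserving involution on pairs. Both conditional distributions are proportional to activities: $\pi_{uz}(I)=\lambda(I)/Z_{uz}$ for $I\in\+N(u,z)$ and $\pi_\emptyset(F)=\lambda(F)/Z_\emptyset$ for $F\in\+P$, because the hole weight is constant on each block. Hence the law of $(I,F)$ is $\lambda(I)\lambda(F)/(Z_{uz}Z_\emptyset)$ on $\+N(u,z)\times\+P$. I will define the map
\[
\Theta(I,F):=(J,I'),\qquad J:=I\oplus\mathcal A,\quad I':=F\oplus\mathcal A,
\]
where $\mathcal A$ is the alternating path component of $I\oplus F$. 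The aim is to show that $\Theta$ is a bijection $\+N(u,z)\times\+P\to\+P\times\+N(u,z)$ with $\lambda(J)\lambda(I')=\lambda(I)\lambda(F)$.

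\textbf{Step 1 (structure and well-definedness).} In $I\oplus F$, every vertex other than $u,z$ has degree $0$ or $2$, and $u,z$ have degree $1$ (each is covered by $F$ only). So the components are even alternating cycles together with exactly one path $\mathcal A$ from $u$ to $z$. This path alternates between $F$- and $I$-edges and begins and ends with $F$-edges. Swapping along $\mathcal A$ gives the following.
\begin{itemize}
\item $J=I\oplus\mathcal A$ agrees with $F$ on $\mathcal A$ and with $I$ off $\mathcal A$; it covers $u$ and $z$, so $J\in\+P$.
\item $I'=F\oplus\mathcal A$ agrees with $I$ on $\mathcal A$ and with $F$ off $\mathcal A$; it leaves exactly $u,z$ unmatched, so $I'\in\+N(u,z)$.
\end{itemize}
The same argument shows that the analogous map defined on $\+P\times\+N(u,z)$ (using the path in $J\oplus I'$) is well defined.

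\textbf{Step 2 (involution and weight preservation).} We have $J\oplus I'=I\oplus F$, since the two copies of $\mathcal A$ cancel. So the path component of $J\oplus I'$ is the same $\mathcal A$, and applying the analogous map to $(J,I')$ returns $(J\oplus\mathcal A,\,I'\oplus\mathcal A)=(I,F)$. Thus $\Theta$ is a bijection, with the analogous map as its inverse. Moreover, $J$ and $I'$ together use exactly the multiset of edges of $I$ and $F$: edges on $\mathcal A$ are exchanged between the two matchings, and edges off $\mathcal A$ stay where they were. Hence $\lambda(J)\lambda(I')=\lambda(I)\lambda(F)$.

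\textbf{Step 3 (conclusion).} For $j\in\+P$, summing over preimages and changing variables through the bijection $\Theta$ gives
\[
\Pr{J=j}=\sum_{I'\in\+N(u,z)}\frac{\lambda(j)\lambda(I')}{Z_{uz}Z_\emptyset}=\frac{\lambda(j)}{Z_\emptyset}=\pi_\emptyset(j).
\]
The same computation shows that $I'\sim\pi_{uz}$ as well, although that is not needed here.

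The only delicate point is Step 2, specifically checking that the path component of $J\oplus I'$ is again $\mathcal A$. This is what makes $\Theta$ an involution rather than merely an injection. It follows immediately from the identity $J\oplus I'=I\oplus F$.
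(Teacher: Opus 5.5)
Your proof is correct and follows the paper's argument: the paper uses the same path-switching map, written as $\Phi(I,F)=(F\oplus\mathcal{A},\,I\oplus\mathcal{A})$ on $\mathcal{N}(u,z)\times\mathcal{P}$ so that it is literally an involution, with the same identity $(F\oplus\mathcal{A})\oplus(I\oplus\mathcal{A})=F\oplus I$ and the same multiset argument for preserving $\lambda(I)\lambda(F)$. Your $\Theta$ is that map with the two output coordinates swapped, and your final marginal computation is identical to the paper's.
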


Fix $u\in V_1$ and $z\in V_2$.  Let $I,F,J$ be generated by the coupling in \Cref{couple-I-J}.
Let $\mathcal{A} = I\oplus J$ be the unique path with endpoints $u,z$ in $I\oplus F$.
In the order from $u$ to $z$, we write the path $\mathcal{A}$ as
\begin{align*}
  u=a_0,b_1,a_1,b_2,\ldots,a_k,b_{k+1}=z,
\end{align*}
where every $a_i\in V_1$ and every $b_i\in V_2$.  Its
$J$-edges are
$(a_{i-1},b_i)$ for $1\leq i\leq k+1$, whereas its $I$-edges are
$(a_i,b_i)$ for $1\leq i\leq k$.
\begin{definition} \label{def:route-I-F}
  We define $\gamma_{I,F}$\footnote{Since $I$ and $F$ uniquely determine $J$, we omit $J$ from the subscript for convenience.} to be a \emph{canonical path} from $I$ to $J$ as follows.  Starting from
  $I$, for $i=1,2,\ldots,k$, replace edge $(a_i,b_i)$ with edge $(a_{i-1},b_i)$.
  After these replacements, the holes are $a_k$ and
  $z=b_{k+1}$.
  We finish the routing by adding the $J$-edge $(a_k,b_{k+1})$.
\end{definition}

Let $P$ denote the transition matrix in \cref{eq:metropolis-transition}.
For each source block $\+N(u,z)$, assign the path $\gamma_{I,F}$ the flow weight
$\pi_{uz}(I)\pi_\emptyset(F)/(n^2+1)$, accumulating weights if the same path arises more than once.
By \cref{couple-I-J}, the two endpoints have marginals $\pi_{uz}$ and $\pi_\emptyset$, so these path weights form a feasible flow in \cref{cor:partition-target-flows} with target block $S_r=\+P$.
Every path has length at most $n$.  Moreover, an undirected transition may be traversed in either orientation.  It follows that
\begin{align} \label{eq:def-congestion}
  \rho_{\+S_{\mathrm{holes}},r}(\Psi)
  \leq \frac{2n}{n^2+1}
  \max_{\alpha\mapsto\beta}
  \frac{1}{\pi(\alpha)P(\alpha,\beta)}
  \sum_{\substack{u\in V_1\\ z\in V_2}}
  \sum_{\substack{I\in\+N(u,z),\ F\in\+P\\
                    \gamma_{I,F}\ni\alpha\mapsto\beta}}
  \pi_{uz}(I)\pi_\emptyset(F),
\end{align}
where the maximum is over feasible directed non-loop transitions of $P$.  The factor $n$ bounds the path length, the factor $(n^2+1)^{-1}$ is the mass of each source block under the ideal weights, and the factor $2$ accounts for the two orientations of an undirected transition.
We next bound this congestion.

\begin{lemma} \label{lem:bound-for-congestion}
  The coupled flow above satisfies $\rho_{\+S_{\mathrm{holes}},r}(\Psi)\leq16n^3$.
\end{lemma}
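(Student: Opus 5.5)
Starting from \cref{eq:def-congestion}, the plan is to prove that for every directed transition $t=\alpha\mapsto\beta$,
\begin{equation*}
  \sum_{u,z}\ \sum_{\substack{I\in\+N(u,z),\,F\in\+P\\ \gamma_{I,F}\ni t}}\pi_{uz}(I)\pi_\emptyset(F)\ \leq\ c\, n\,\pi(\alpha)P(\alpha,\beta)\,(n^2+1)
\end{equation*}
for a small absolute constant $c$. Plugging this into \cref{eq:def-congestion} then gives $2c\,n^2\cdot n$. Taking $c=8$ produces $16n^3$.

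Under $w^*$ we have $\pi_{uz}(I)=\lambda(I)/Z_{uz}$, $\pi_\emptyset(F)=\lambda(F)/Z_\emptyset$, and $\pi(M)=\lambda(M)w^*(h)/(Z(n^2+1))$-style normalizations with $\pi(\+P)=\pi(\+N(u,z))=1/(n^2+1)$. The sum on the left is therefore $(n^2+1)^{-2}\sum \lambda(I)\lambda(F)\cdot Z_{uz}^{-1}Z_\emptyset^{-1}\cdot(n^2+1)^2/\ldots$. The cleaner bookkeeping is this: $\pi(M)=\lambda(M)/((n^2+1)Z_h)$ for $M$ with hole pattern $h$. The quantity to control is then
\begin{equation*}
  \sum_{(I,F):\,\gamma_{I,F}\ni t}\frac{\lambda(I)\lambda(F)}{Z_{uz}Z_\emptyset}
  \quad\text{versus}\quad
  \frac{\min\{\lambda(\alpha)/Z_{h(\alpha)},\lambda(\beta)/Z_{h(\beta)}\}}{4n}.
\end{equation*}

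The key step is the standard injective encoding, adapted to the coupled paths. Given $t$ and a pair $(I,F)$ whose path uses it, set $\eta:=I\oplus F\oplus\alpha$, with a small correction of one or two edges when $\alpha$ is a slide, so that $\eta$ is a perfect or near-perfect matching. Then $\lambda(I)\lambda(F)=\lambda(\alpha)\lambda(\eta)$, up to the corrected edge. I will check that $(I,F)$ is recovered from $(t,\eta)$ together with the hole labels $u,z$: the cycles of $I\oplus F$ lie in both $I\oplus F$ and $\alpha\oplus\eta$ identically, and $u$ is recoverable from $\alpha$'s hole and the path. This yields
\begin{equation*}
  \sum\lambda(I)\lambda(F)\ \leq\ \lambda(\alpha)\sum_{\eta}\lambda(\eta).
\end{equation*}

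Because the routing never touches the even cycles, the whole cycle structure is carried unchanged into the encoding. So $\eta$ ranges over matchings with a \emph{fixed} hole pattern $h'$ determined by $t$ and $(u,z)$. That hole pattern is essentially $\{u\}\cup\{\text{current right hole}\}$ or a perfect matching, so $\sum_\eta\lambda(\eta)\le Z_{h'}$. This is where the factor $n$ is saved relative to \cite{BSVV08}: there the $\eta$'s could carry two extra holes, forcing a sum over $n^2$ patterns. The remaining sum is then over $(u,z)$ consistent with $t$. For a slide that fixes one hole, $u$ equals the fixed left hole, and $z$ ranges over $n$ values.

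The resulting bound is $\lambda(\alpha)\sum_{z}Z_{h'(z)}/(Z_{uz}Z_\emptyset)$. The main obstacle is then an inequality of \cite{BSVV08} type comparing $Z_{h'}Z_{h(\alpha)}$ with $Z_{uz}Z_\emptyset$. An example is $Z_{ub}Z_{az}\le Z_{uz}Z_{ab}+Z_\emptyset\cdot(\ldots)$, or the simpler $Z_{uz}Z_{\emptyset}\ge$ a constant fraction of $Z_{ub}Z_{az'}$-type products. These are needed so that each of the $n$ terms is $O(1)\cdot\lambda(\alpha)/Z_{h(\alpha)}$. I would first try to prove the needed comparison directly by a symmetric-difference bijection between $\+N(u,b)\times\+N(a,z)$ and $\+P\times\+N(u,z)\ \cup\ \+N(u,z)\times\+P$, in the style of BSVV's inequalities. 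I would then treat the $\min$ in the Metropolis capacity by handling the two orientations separately, with the symmetric argument covering the other end. The add/delete transition at the end of the path is the easy case, since there $\alpha\in\+N(a_k,z)$ and $\beta=J$ is perfect. Collecting a factor $4n$ from $P$, a factor $n$ from the choice of $z$, a constant from the comparison inequality, and the $2n/(n^2+1)$ prefactor yields $16n^3$.
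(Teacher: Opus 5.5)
Your architecture matches the paper's: an injective multiset encoding for each fixed source hole, the observation that one hole stays fixed along every route so only $n$ source blocks load a given transition, a comparison inequality of \cite{BSVV08} type, and an easy addition case. The gap is in the step you flag as the main obstacle, and it cannot be closed in the form you set up.

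For a slide with $\alpha\in\mathcal{N}(v,z)$ and $\beta=\alpha\setminus\{(x,a)\}\cup\{(v,a)\}$, your corrected encoding is $K=(I\uplus F)\setminus(\alpha\uplus\{(v,a)\})\in\mathcal{N}(u,a)$. The weight identity is $\lambda(I)\lambda(F)=\lambda(\alpha)\lambda(K)\lambda_{(v,a)}$. You discard $\lambda_{(v,a)}$ and write $\sum\lambda(I)\lambda(F)\le\lambda(\alpha)\sum_\eta\lambda(\eta)$, which leaves you needing $Z_{ua}Z_{vz}\le C\,Z_{uz}Z_\emptyset$. That inequality is false. Take $n=2$, $V_1=\{u,v\}$, $V_2=\{a,z\}$, with all activities $1$ except $\lambda_{(v,a)}=\epsilon$; then $Z_{ua}Z_{vz}/(Z_{uz}Z_\emptyset)=1/(\epsilon(1+\epsilon))$. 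Your proposed bijection has the same defect. A pair in $\mathcal{N}(u,b)\times\mathcal{N}(a,z)$ has $2n-2$ edges and four odd-degree vertices. It can correspond to a pair in $\mathcal{N}(u,z)\times\mathcal{P}$ only after adding the edge $(a,b)$, whose activity must then appear.

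The missing ingredient is \cite[Lemma~6.2(2)]{BSVV08}: $\lambda_{(x,a)}Z_{ua}Z_{xz}\le 2Z_{uz}Z_\emptyset$ for $u\neq x$, $a\neq z$. Applied with $x=v$, it bounds the load from source $u$ against $\pi(\alpha)$. Using $\lambda(\alpha)\lambda_{(v,a)}=\lambda(\beta)\lambda_{(x,a)}$ and the same inequality with $x$, it bounds that same load against $\pi(\beta)$. This, rather than a separate orientation argument, is what handles the $\min$ in the capacity.

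You also have the roles of the holes reversed. The routes fix the right hole: the $i$th slide goes from $\mathcal{N}(a_{i-1},z)$ to $\mathcal{N}(a_i,z)$. So the transition determines $z$, while the source hole $u=a_0$ generally differs from the current left hole $v=a_{i-1}$ and ranges over $V_1$. The factor $n$ comes from summing over $u$, not over $z$. This exposes a boundary case your sketch misses. When $u=v$ (the first slide), the switching inequality cannot be applied with $x=v$ for the $\pi(\alpha)$ comparison. Instead, the injection $N\mapsto N\cup\{(v,a)\}$ from $\mathcal{N}(v,a)$ to $\mathcal{P}$ gives $\lambda_{(v,a)}Z_{va}\le Z_\emptyset$, again with the edge activity.

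Finally, your opening display is off by a factor $n$. A per-transition bound $cn(n^2+1)\pi(\alpha)P(\alpha,\beta)$ would give $2cn^2$, not $2cn^3$. The provable target is $2n(n^2+1)\min\{\pi(\alpha),\pi(\beta)\}=8n^2(n^2+1)\pi(\alpha)P(\alpha,\beta)$, which is what your closing tally actually reflects.
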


\cref{lem:bound-for-congestion} is proved in \cref{sec:bound-for-congestion}.
Before proceeding, we identify the technical rationale for the $O(n)$ improvement in the associated congestion bound compared with \cite{BSVV08}.  The key point in \cref{eq:def-congestion} is that every route keeps its right hole $z$ fixed, so a fixed transition can receive flow from only the $n$ source blocks $\mathcal N(u,z)$.  In the cycle-unwinding portion of the paths in~\cite{BSVV08}, this fixed-hole property is lost and $O(n^2)$ source blocks may contribute; eliminating that portion is the factor-$n$ saving in \cref{lem:bound-for-congestion}.

We now prove \cref{lem:restricted-poincare}.
\begin{proof}[Proof of \cref{lem:restricted-poincare}]
  \Cref{eq:ideal-restricted-poincare} follows directly by combining \cref{cor:partition-target-flows} and \cref{lem:bound-for-congestion}.
  It remains to deduce \cref{eq:rough-restricted-poincare} from \cref{eq:ideal-restricted-poincare}.
  Fix positive activities $\lambda$ and rough hole weights $w$ as in the lemma.  
  Let $\pi$ and $P$ denote the stationary distribution and transition matrix of $\MC(\lambda,w)$, 
  and let $\pi^*$ and $P^*$ denote those of $\MC(\lambda,w^*)$.
  Let $Z$ and $Z^*$ be the respective normalizing constants.
  For a perfect matching, the ratio of its unnormalized masses in the two chains is $1$; for a matching in $\+N(u,v)$, this ratio is $w(u,v)/w^*(u,v)$.  Roughness therefore places every such ratio in $[1/2,2]$.
  Summing the unnormalized masses gives $Z^*/2\leq Z\leq2Z^*$.  Consequently,
  \begin{equation*}
    \forall M\in\Omega,\qquad
    \frac{\pi(M)}{\pi^*(M)}
    \in\left[\frac14,4\right].
  \end{equation*}

  The blockwise factor cancels upon conditioning, so the conditional distributions of $\pi$ and $\pi^*$ within every block coincide.  Hence the block-averaging operator $\mathsf A_{\+S_{\mathrm{holes}}}$ is the same for the two chains.
  Using the variational characterization of variance and the bound $\pi\leq4\pi^*$, 
  \begin{align*}
    \forall f:\Omega\to\R,\qquad
    \Var[\pi]{\mathsf A_{\+S_{\mathrm{holes}}} f}
    &=\min_{a\in\R}\E[\pi]{(\mathsf A_{\+S_{\mathrm{holes}}} f-a)^2}\\
    &\leq\E[\pi]{\bigl(\mathsf A_{\+S_{\mathrm{holes}}} f-\E[\pi^*]{\mathsf A_{\+S_{\mathrm{holes}}} f}\bigr)^2}\\
    &\leq4\Var[\pi^*]{\mathsf A_{\+S_{\mathrm{holes}}} f}.
  \end{align*}

  The two chains use the same state space and proposal moves.  
  Since all activities and hole weights are positive, 
  every proposed move has positive acceptance probability under both chains.  
  Hence, for distinct states $M,M'\in\Omega$,
  $P(M,M')>0$ if and only if $P^*(M,M')>0$,
  which means that the two chains have the same transition graph.
  For every pair of distinct adjacent states $M,M'$, 
  combining \cref{eq:metropolis-stationary-flow} and the lower bound $\pi\geq\pi^*/4$ gives
  \begin{equation*}
    \pi(M)P(M,M')
    =\frac1{4n}\min\{\pi(M),\pi(M')\}
    \geq\frac1{16n}\min\{\pi^*(M),\pi^*(M')\}
    =\frac14\pi^*(M)P^*(M,M').
  \end{equation*}
  Since self-loops contribute zero to the Dirichlet form, 
  summing this inequality over adjacent pairs yields $\+E_{P^*}(f,f)\leq4\+E_P(f,f)$.
  Applying the ideal-weight bound in \cref{eq:ideal-restricted-poincare} to $\MC(\lambda,w^*)$, 
  \begin{align*}
    \forall f:\Omega\to\R,\qquad
    \Var[\pi]{\mathsf A_{\+S_{\mathrm{holes}}}f}
    \leq4\Var[\pi^*]{\mathsf A_{\+S_{\mathrm{holes}}} f}
    \leq64n^3\+E_{P^*}(f,f)
    \leq256n^3\+E_P(f,f).
  \end{align*}
  Taking the supremum over all $f$ proves \cref{eq:rough-restricted-poincare}.
\end{proof}

\begin{proof}[Proof of \cref{couple-I-J}]
  We first justify the asserted structure of the symmetric difference $I\oplus F$ in the second bullet.  Regard
  $I\oplus F$ as a graph on $V_1\cup V_2$.  Since $I$ leaves exactly $u$ and
  $z$ unmatched whereas $F$ is perfect, the vertices $u$ and $z$ have degree
  one in $I\oplus F$.  Every other vertex has degree zero if $I$ and $F$
  use the same incident edge there, and degree two otherwise; in the latter
  case, one incident edge belongs to $I$ and the other to $F$.  Hence every
  nontrivial component is a path or a cycle, the edges in each component
  alternate between $I$ and $F$, and the only path component is the path
  $\mathcal A$ whose endpoints are $u$ and $z$.  All remaining components are
  even alternating cycles.

  We now prove that $J$ has distribution $\pi_\emptyset$.  Define the following map:
  \begin{align*}
    \Phi:\+N(u,z)\times\+P
      &\longrightarrow \+N(u,z)\times\+P,\\
    \Phi(I,F)&:=\tp{F\oplus\mathcal A, I\oplus\mathcal A}.
  \end{align*}
  %This map is type preserving.  
  First, we show that $F\oplus\mathcal A \in \+N(u,z)$ and $I\oplus\mathcal A \in \+P$.
  Indeed, along $\mathcal A$, switching removes
  from $F$ the edges incident to the endpoints $u,z$ and replaces the
  $F$-edge at every internal vertex by its incident $I$-edge.  Thus
  $F\oplus\mathcal A\in\+N(u,z)$.  Conversely, the switch inserts into
  $I$ the $F$-edges incident to $u,z$ and leaves every internal vertex matched,
  so $I\oplus\mathcal A\in\+P$.

  Moreover,
  \begin{align*}
    \tp{F\oplus\mathcal A}\oplus
    \tp{I\oplus\mathcal A}=F\oplus I.
  \end{align*}
  Therefore the unique $u$--$z$ path determined by the output pair is again
  $\mathcal A$.  Switching it a second time recovers $(I,F)$, and hence $\Phi$
  is an involution, in particular a bijection.

  The switch also preserves the product activity.  Every edge has the same
  total multiplicity in the two matchings before and after the switch, and
  consequently
  \begin{align*}
    \*\lambda(I)\*\lambda(F)
      =\*\lambda(F\oplus\mathcal A)
       \*\lambda(I\oplus\mathcal A).
  \end{align*}
  Within a fixed block the reweighting factor is constant, so
  \begin{align*}
    \pi_{uz}(I)=\frac{\*\lambda(I)}{Z_{uz}},
    \qquad
    \pi_\emptyset(F)=\frac{\*\lambda(F)}{Z_\emptyset}.
  \end{align*}
  Since $I$ and $F$ are independent, their joint mass is therefore
  $\*\lambda(I)\*\lambda(F)/(Z_{uz}Z_\emptyset)$.  The bijection and the product
  identity show that the push-forward under $\Phi$ has this same product law.
  In particular, for every $J_0\in\mathcal{P}$,
  \begin{align*}
    \Pr{J=J_0}
      &=\sum_{I'\in\+N(u,z)}
        \frac{\*\lambda(I')\*\lambda(J_0)}{Z_{uz}Z_\emptyset}
        =\frac{\*\lambda(J_0)}{Z_\emptyset}
        =\pi_{\emptyset}(J_0). \qedhere
  \end{align*}
\end{proof}

\subsection{Random Weight-Refinement Algorithm}

Using \cref{lem:restricted-poincare} with \cref{thm:restricted-resampling}, we can prove \cref{thm:hole-weight-boosting} with the improved running time for weight refinement.

\begin{proof}[Proof of \Cref{thm:hole-weight-boosting}]
  Fix positive activities $\lambda$ and rough hole weights $w$, and let $\pi:=\pi_{\lambda,w}$.
  Roughness gives
  \begin{equation*}
    \frac12\lambda(\+P)
    \leq
    w(u,v)\lambda(\+N(u,v))
    \leq
    2\lambda(\+P)
  \end{equation*}
  for every $(u,v)$.
  Consequently, $Z\leq(1+2n^2)\lambda(\+P)$, and every block $B\in\+S_{\mathrm{holes}}$ satisfies
  \begin{equation*}
    \pi(B)\geq\frac1{4(n^2+1)}.
  \end{equation*}
  Moreover, \cref{eq:rough-restricted-poincare} gives $\tholes\leq256n^3$.

  Let $M_{\max}$ be a perfect matching of maximum activity.
  Since $\lambda(\+P)\leq n!\lambda(M_{\max})$,
  \begin{equation*}
    \pi(M_{\max})
    \geq
    \frac1{(1+2n^2)n!}.
  \end{equation*}
  Under the roughness assumption, the mixing bound of~\cite[Theorem~4.1]{BSVV08} applies to \cref{eq:metropolis-transition}; its off-diagonal transition probabilities agree with those used there up to an absolute constant factor.
  It therefore produces an initial distribution within total-variation distance $1/8$ of $\pi$, starting from $M_{\max}$, in $O(n^5\log n)$ transitions.
  A maximum-activity perfect matching can be found in lower-order time.

  Let $X_0\sim\pi$, and let $X_1,X_2,\ldots$ be generated by $\MC(\lambda,w)$.
  For every block $B\in\+S_{\mathrm{holes}}$, apply \cref{item:restricted-resampling-restricted} of \cref{thm:restricted-resampling} to its indicator.
  For every $m\geq1$,
  \begin{equation*}
    \Var{\frac1m\sum_{t=0}^{m-1}\one_B(X_t)}
    \leq
    \frac{2\tholes}{m}\pi(B),
  \end{equation*}
  and hence Chebyshev's inequality gives
  \begin{equation*}
    \Pr{\left|\frac1m\sum_{t=0}^{m-1}\one_B(X_t)-\pi(B)\right|>\frac1{10}\pi(B)}
    \leq
    \frac{200\tholes}{m\pi(B)}.
  \end{equation*}
  The preceding bounds on $\tholes$ and $\pi(B)$ allow a common choice $m=O(n^5)$ for which the last probability is at most $1/8$ for every block.
  All block estimates are computed from this same trajectory, since each visited matching increments exactly one block counter.
  If the trajectory instead begins after the preceding burn-in, contraction of total-variation distance shows that each fixed block estimate fails with probability at most $1/4$.

  Let $R$ be the least odd integer satisfying
  \begin{equation*}
    R\geq8\log\frac{n^2+1}{\delta}.
  \end{equation*}
  Run $R$ independent copies of the burn-in and trajectory, and take the median separately for each block.
  Hoeffding's inequality and a union bound show that all $n^2+1$ medians are simultaneously relative-$1/10$ estimates with probability at least $1-\delta$.
  Each copy costs $O(n^5\log n)$ transitions, and each visited matching updates one block counter in constant time.
  Since $R=O(\log(2n/\delta))$, the total running time is
  \begin{equation*}
    T_{\mathcal R}(n,\delta)
    =O\left(
      n^5\log n\,
      \log\left(\frac{2n}{\delta}\right)
    \right).
  \end{equation*}

  Denote the resulting estimates of $\pi(\+P)$ and $\pi(\+N(u,v))$ by $\widehat p_0$ and $\widehat p_{uv}$.
  If any estimate is zero, return the original weights; otherwise, set
  \begin{equation*}
    w'(u,v)
    :=
    w(u,v)\frac{\widehat p_0}{\widehat p_{uv}}.
  \end{equation*}
  Since
  \begin{equation*}
    w^*(u,v)
    =
    w(u,v)\frac{\pi(\+P)}{\pi(\+N(u,v))},
  \end{equation*}
  simultaneous relative-$1/10$ accuracy implies
  \begin{equation*}
    \frac{w'(u,v)}{w^*(u,v)}
    \in
    \left[\frac9{11},\frac{11}{9}\right]
    \subseteq
    \left[\frac1{\sqrt2},\sqrt2\right]
  \end{equation*}
  for every $(u,v)$.
  Thus $w'$ is accurate with probability at least $1-\delta$, as required.
\end{proof}

\section{Congestion Analysis} \label{sec:bound-for-congestion}
In this section, we prove \Cref{lem:bound-for-congestion}.
According to the definition of the Metropolis chain (\Cref{sec:metropolis-chain}) and \cref{eq:def-congestion}, it suffices for us to show the following bound for every transition $\alpha \mapsto \beta$:
\begin{align} \label{eq:congestion-target}
   \sum_{u\in V_1,\ z\in V_2}
  \sum_{\substack{I\in\+N(u,z),\ F\in\+P \\
                   \gamma_{I,F}\ni \alpha\mapsto\beta}}
  \frac{\*\lambda(I)\*\lambda(F)}{Z_{uz}Z_\emptyset}
  \leq 2n(n^2+1)\min\set{\pi(\alpha),\pi(\beta)}.
\end{align}
Indeed, \cref{eq:metropolis-stationary-flow} shows that \cref{eq:congestion-target} bounds the maximum in \cref{eq:def-congestion} by $8n^2(n^2+1)$; multiplying by the crefactor in \cref{eq:def-congestion} gives $16n^3$.

We will use the following result as a black box.
\begin{lemma}[{\cite[Lemma~6.2(2)]{BSVV08}}] \label{switching-inequality}
  For $u,x\in V_1$ and $a,z\in V_2$ such that $u\neq x$ and $a\neq z$,
  \begin{align*}
    \lambda_{(x,a)} Z_{ua}Z_{xz} \leq 2 Z_{uz}Z_\emptyset.
  \end{align*}
\end{lemma}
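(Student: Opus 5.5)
We plan to prove the inequality with an injective-type switching argument at the level of matching pairs. Expanding the definitions gives
\begin{equation*}
  \lambda_{(x,a)}Z_{ua}Z_{xz}=\sum_{L\in\mathcal N(u,a)}\sum_{R\in\mathcal N(x,z)}\lambda_{(x,a)}\,\lambda(L)\lambda(R),
  \qquad
  Z_{uz}Z_\emptyset=\sum_{I\in\mathcal N(u,z)}\sum_{F\in\mathcal P}\lambda(I)\lambda(F).
\end{equation*}
It therefore suffices to build a map $\Theta:\mathcal N(u,a)\times\mathcal N(x,z)\to\mathcal N(u,z)\times\mathcal P$ with two properties. First, $\Theta$ preserves the edge multiset with the extra edge $(x,a)$ added, i.e.\ $L\uplus R\uplus\{(x,a)\}=I\uplus F$, so that $\lambda_{(x,a)}\lambda(L)\lambda(R)=\lambda(I)\lambda(F)$ as in the proof of \cref{couple-I-J}. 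Second, every pair $(I,F)$ has at most two preimages. Summing over preimages then yields the factor $2$.

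\textbf{Step 1: structure of $L\oplus R$.} Since $L$ misses exactly $u,a$ and $R$ misses exactly $x,z$, the four distinct vertices $u,a,x,z$ have degree one in $L\oplus R$, and every other vertex has degree $0$ or $2$. Hence $L\oplus R$ consists of even alternating cycles together with exactly two alternating paths whose endpoints are $u,a,x,z$. A path leaves $u$ or $a$ through an $R$-edge and leaves $x$ or $z$ through an $L$-edge. Comparing parity (bipartite side of the endpoints against the edge type at each end) rules out the pairing $\{u\text{--}z,\ a\text{--}x\}$. Only two cases remain: (A) paths $u\text{--}a$ and $x\text{--}z$; (B) paths $u\text{--}x$ and $a\text{--}z$.

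\textbf{Step 2: define $\Theta$ in each case.} In case (A), join the two paths through the new edge $(a,x)$. This gives one alternating walk from $u$ to $z$, namely the $u\text{--}a$ path, then the edge $(a,x)$, then the $x\text{--}z$ path, in which $(a,x)$ plays the role of an $F$-edge. Redistribute the edges of this walk: one alternating class together with $(a,x)$ goes to $F$, the other class goes to $I$, so that $I$ misses exactly $u,z$ and $F$ is perfect. All edges off this walk keep their owner ($L\mapsto I$, $R\mapsto F$). In case (B), switch $L$ and $R$ along the $a\text{--}z$ path; this makes $L'$ miss $u,z$ and $R'$ miss $x,a$. Then $I:=L'$ and $F:=R'\cup\{(x,a)\}$.

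\textbf{Step 3 (the main obstacle): verify validity and at most $2$-to-$1$.} I expect the bookkeeping here to carry the difficulty: checking in each case that $I$ and $F$ really are matchings with the claimed holes, and that $\Theta$ can be inverted given which case was used. For inversion, $(x,a)\in F$ always holds. Given the case label and $(I,F)$, we remove $(x,a)$ from $F$ and look at the $u\text{--}z$ path in $I\oplus F$. In case (A) this path passes through $(x,a)$, and splitting it there recovers the two original paths. In case (B) we undo the switch along the segment of $I\oplus F$ between $a$ and $z$. Both reconstructions are deterministic, so each image has at most one preimage per case and at most two in total. Summing the weight identity over all $(L,R)$ then gives $\lambda_{(x,a)}Z_{ua}Z_{xz}\le 2Z_{uz}Z_\emptyset$.
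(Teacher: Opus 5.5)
Your overall plan is sound: a weight-preserving map $\mathcal N(u,a)\times\mathcal N(x,z)\to\mathcal N(u,z)\times\mathcal P$ that absorbs the extra edge $(x,a)$, split by the two possible pairings of the path endpoints. Your Step 1 parity analysis is correct, and so are your case (B) construction and its inversion. (The paper gives no proof of this lemma; it imports it from \cite{BSVV08} as a black box.)

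There is, however, a concrete error in case (A): $(a,x)$ cannot be an $F$-edge. Let $P_1$ be the $u$--$a$ path and $P_2$ the $x$--$z$ path. Along the walk $u\to a\to x\to z$, the edge before $(a,x)$ is the $R$-edge of $P_1$ at $a$, and the edge after it is the $L$-edge of $P_2$ at $x$. So the two alternating classes of the walk are:
\begin{itemize}
\item $\{R\text{-edges of }P_1\}\cup\{L\text{-edges of }P_2\}$, which covers every vertex of the walk;
\item $\{L\text{-edges of }P_1\}\cup\{(a,x)\}\cup\{R\text{-edges of }P_2\}$, which misses exactly $u$ and $z$.
\end{itemize}
Since $F$ must be perfect and $I$ must miss $u,z$, the new edge necessarily goes to $I$. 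Putting it into $F$ either double-covers $a$ and $x$ (with the first class) or leaves $u,z$ uncovered in $F$ (with the second). Consequently the premise of your Step 3, ``$(x,a)\in F$ always holds'', is false. The case (A) inversion you describe (remove $(x,a)$ from $F$, then split the $u$--$z$ path at $(x,a)$) does not apply to any valid construction.

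The repair is short. In case (A), switch along $P_2$ and set
\begin{itemize}
\item $F:=R\oplus P_2$, which is perfect because $P_2$ has $L$-edges at both ends $x,z$;
\item $I:=(L\oplus P_2)\cup\{(x,a)\}$, which has holes $u,z$.
\end{itemize}
Then $I\uplus F=L\uplus R\uplus\{(x,a)\}$. You recover $(L,R)$ by removing $(x,a)$ from $I$ and switching back along the component of $(I\setminus\{(x,a)\})\oplus F=L\oplus R$ that contains $x$, namely $P_2$.

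With this correction your argument goes through, and it gives slightly more. Case (A) images satisfy $(x,a)\in I\setminus F$, because the $F$-edge at $x$ lies on $P_2$, which avoids $a$. Case (B) images satisfy $(x,a)\in F\setminus I$, because the $I$-edge at $x$ is the last edge of the $u$--$x$ path, which avoids $a$. So the images of the two cases are disjoint and the map is injective. This yields $\lambda_{(x,a)}Z_{ua}Z_{xz}\le Z_{uz}Z_\emptyset$, and a fortiori the stated bound with constant $2$.
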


We prove \cref{eq:congestion-target} by a case analysis on $\abs{\alpha\oplus\beta}$: (1) $\abs{\alpha\oplus\beta} = 2$; (2) $\abs{\alpha\oplus\beta} = 1$.

\paragraph{Case: $\alpha\mapsto\beta$ is a slide ($\abs{\alpha\oplus\beta} = 2$).}
The route in \Cref{def:route-I-F} keeps its right hole fixed and only moves
its left hole.  Hence every slide fixing the left hole has zero load and satisfies
\cref{eq:congestion-target} immediately.  It remains to consider a slide fixing
the right hole that is used by at least one route.  There are unique $v,x\in V_1$ and
$a,z\in V_2$, with $v\neq x$ and $a\neq z$, such that
\begin{align*}
  \alpha\in\mathcal{N}(v,z), & \hspace{.3in} &
  \alpha\setminus\beta=\set{(x,a)},\\
  \beta\in\mathcal{N}(x,z), & & 
  \beta\setminus\alpha=\set{(v,a)}.
\end{align*}
Equivalently,
\begin{align*}
  \beta=\alpha\setminus\set{(x,a)}\cup\set{(v,a)}.
\end{align*}

Fix a possible source hole $u\in V_1$.  Define the encoding
\begin{align} \label{eq:slide-encoding}
  \Psi_u^{\mathrm{sl}}:
  \set{(I,F)\in\+N(u,z)\times\+P:
        \gamma_{I,F}\ni\alpha\mapsto\beta}
  &\longrightarrow\+N(u,a),\\
  (I,F)&\longmapsto
  K:=(I\mathbin\uplus F)
      \setminus\tp{\alpha\mathbin\uplus\set{(v,a)}}. \notag
\end{align}
Here $\uplus$ and $\setminus$ in the formula defining $K$ are multiset union
and multiset subtraction, so edge multiplicities are retained.

We verify that the encoding is well defined.  If
$\alpha\mapsto\beta$ is the $i$th slide on the alternating path, then
\begin{align*}
  v=a_{i-1},\qquad x=a_i,\qquad a=b_i.
\end{align*}
On the path, $\alpha\mathbin\uplus\set{(v,a)}$ contains the $J$-edges
of indices at most $i$ and the $I$-edges of indices at least $i$.  Away from
the path, $\alpha$ agrees with $I$.  Thus the multiset subtraction is valid.
On the path, $K$ contains the $I$-edges of indices less than $i$ and the
$J$-edges of indices greater than $i$; away from the path, $\alpha=I$, so
$K$ agrees with $F$.  Therefore $K$ is a matching with holes $u$ and
$a$, so $K\in\+N(u,a)$.

The encoding is injective for fixed $u$ and fixed
$\alpha\mapsto\beta$.  Indeed, from $\alpha$ and $K$ we recover the multiset
\begin{align*}
  I\mathbin\uplus F
  =\alpha\mathbin\uplus K\mathbin\uplus\set{(v,a)}.
\end{align*}
Every edge of multiplicity two belongs to both $I$ and $F$.  Among the
remaining edges, the unique path with endpoints $u,z$ is oriented from $u$:
its first edge belongs to $F$, since $u$ is unmatched in $I$, and alternation
then recovers both matchings on the path.  Every other nontrivial component is
an even alternating cycle.  The route never changes such a cycle, so
$\alpha$ agrees with $I$ there and determines its two alternating classes.
This reconstructs the ordered pair $(I,F)$ uniquely.

The multiset identity also gives the activity identity
\begin{align} \label{eq:slide-activity}
  \*\lambda(I)\*\lambda(F)
  =\*\lambda(\alpha)\*\lambda(K)\lambda_{(v,a)}.
\end{align}
Consequently, injectivity of the encoding implies
\begin{align} \label{eq:slide-fixed-source}
  &\sum_{\substack{I\in\+N(u,z),\ F\in\+P\\
                    \gamma_{I,F}\ni\alpha\mapsto\beta}}
  \frac{\*\lambda(I)\*\lambda(F)}{Z_{uz}Z_\emptyset}
  \leq
  \frac{\*\lambda(\alpha)\lambda_{(v,a)}Z_{ua}}
       {Z_{uz}Z_\emptyset}.
\end{align}

We compare the last expression with both endpoint probabilities.  Since
$\alpha\in\+N(v,z)$,
\begin{align} \label{eq:slide-alpha-comparison}
  &\frac{1}{\pi(\alpha)}
  \frac{\*\lambda(\alpha)\lambda_{(v,a)}Z_{ua}}
       {Z_{uz}Z_\emptyset}
  =(n^2+1)
    \frac{\lambda_{(v,a)}Z_{ua}Z_{vz}}
         {Z_{uz}Z_\emptyset}.
\end{align}
If $u\neq v$, \Cref{switching-inequality}, with its variable $x$ equal to
$v$, bounds \cref{eq:slide-alpha-comparison} by $2(n^2+1)$.  If $u=v$,
then $i=1$, so this is the first slide.  In this boundary case the explicit map
\begin{align*}
  \+N(v,a)&\longrightarrow\+P,&
  N&\longmapsto N\cup\set{(v,a)}
\end{align*}
is injective, and hence $\lambda_{(v,a)}Z_{va}\leq Z_\emptyset$.
Thus \cref{eq:slide-alpha-comparison} is at most $n^2+1$, and therefore at
most $2(n^2+1)$, in this case as well.

For the other endpoint,
\begin{align*}
  \*\lambda(\beta)
  =\*\lambda(\alpha)\frac{\lambda_{(v,a)}}{\lambda_{(x,a)}}.
\end{align*}
Since $\beta\in\+N(x,z)$, we obtain
\begin{align} \label{eq:slide-beta-comparison}
  &\frac{1}{\pi(\beta)}
  \frac{\*\lambda(\alpha)\lambda_{(v,a)}Z_{ua}}
       {Z_{uz}Z_\emptyset}
  =(n^2+1)
    \frac{\lambda_{(x,a)}Z_{ua}Z_{xz}}
         {Z_{uz}Z_\emptyset}
  \leq 2(n^2+1).
\end{align}
Here \Cref{switching-inequality} applies because a route using this slide has
$u=a_0$ and $x=a_i$ for some $i\geq1$, so $u\neq x$.

It follows from \cref{eq:slide-fixed-source},
\cref{eq:slide-alpha-comparison}, and \cref{eq:slide-beta-comparison} that
the load from each fixed source block is at most
\begin{align*}
  2(n^2+1)\min\set{\pi(\alpha),\pi(\beta)}.
\end{align*}
All source blocks with a right hole different from $z$ have zero load.
Summing over the at most $n$ choices of $u$ proves
\cref{eq:congestion-target} in the slide case.

\paragraph{Case: $\alpha\mapsto\beta$ is an addition ($\abs{\alpha\oplus\beta} = 1$).}
Suppose that
\begin{align*}
  \alpha\in\+N(x,z),\qquad
  \beta=\alpha\cup\set{(x,z)}\in\+P.
\end{align*}
The addition is the last transition of every route that uses it.  The right
hole is fixed throughout the route, so only source blocks $\+N(u,z)$ can
contribute.

For each $u\in V_1$, define the encoding
\begin{align} \label{eq:addition-encoding}
  \Psi_u^{\mathrm{add}}:
  \set{(I,F)\in\+N(u,z)\times\+P:
        \gamma_{I,F}\ni\alpha\mapsto\beta}
  &\longrightarrow\+N(u,z),\\
  (I,F)&\longmapsto
  K:=(I\mathbin\uplus F)\setminus\beta, \notag
\end{align}
where the subtraction is again a multiset subtraction.  Immediately after
the addition the route is at $J$, so $\beta=J$.  On the unique $u$--$z$
path, $\beta$ contains the $J$-edges, and the subtraction leaves the
$I$-edges.  Off that path, $\beta=J=I$, so the subtraction removes the
$I$-edges and $K$ agrees with $F$.  Hence $K$ is a matching with holes $u,z$, proving
that the encoding maps into $\+N(u,z)$.

This encoding is injective.  From $\beta$ and $K$ one recovers
\begin{align*}
  I\mathbin\uplus F=\beta\mathbin\uplus K.
\end{align*}
As in the slide case, doubled edges belong to both inputs, the unique
$u$--$z$ path is oriented from $u$ with its first edge in $F$, and the
untouched cycles are oriented by the edges of $\beta=J=I$ on those cycles.
Thus $(I,F)$ is reconstructed uniquely.  Taking activities gives
\begin{align} \label{eq:addition-activity}
  \*\lambda(I)\*\lambda(F)
  =\*\lambda(\beta)\*\lambda(K).
\end{align}
It follows that, for each $u\in V_1$,
\begin{align} \label{eq:addition-fixed-source}
  &\sum_{\substack{I\in\+N(u,z),\ F\in\+P\\
                    \gamma_{I,F}\ni\alpha\mapsto\beta}}
  \frac{\*\lambda(I)\*\lambda(F)}{Z_{uz}Z_\emptyset}
  \leq \frac{\*\lambda(\beta)}{Z_\emptyset}
  =(n^2+1)\pi(\beta).
\end{align}

To compare the two endpoint probabilities, define the explicit injection
\begin{align*}
  \+N(x,z)&\longrightarrow\+P,&
  N&\longmapsto N\cup\set{(x,z)}.
\end{align*}
It gives $\lambda_{(x,z)}Z_{xz}\leq Z_\emptyset$.  Since
$\*\lambda(\beta)=\lambda_{(x,z)}\*\lambda(\alpha)$, we have
\begin{align*}
  \pi(\beta)
  =\frac{\*\lambda(\beta)}{(n^2+1)Z_\emptyset}
  \leq\frac{\*\lambda(\alpha)}{(n^2+1)Z_{xz}}
  =\pi(\alpha).
\end{align*}
Thus the right-hand side of \cref{eq:addition-fixed-source} equals
$(n^2+1)\min\set{\pi(\alpha),\pi(\beta)}$.  Summing over the at most $n$ possible
choices of $u$ proves \cref{eq:congestion-target} in the addition case.
Deletions have zero load because the final transition of every route is an
addition.  The two cases above therefore cover every transition of the
chain, which proves \cref{eq:congestion-target} and hence
\Cref{lem:bound-for-congestion}.

\bibliographystyle{alpha}
\bibliography{refs}

\end{document}